\pdfoutput=1
\documentclass[11pt]{article}

\usepackage[T1]{fontenc}
\usepackage[utf8]{inputenc}
\usepackage{lmodern}
\usepackage[margin=1in]{geometry}
\usepackage{amsmath,amssymb,amsthm,mathtools}
\usepackage{microtype}
\usepackage[hidelinks]{hyperref}

\input{glyphtounicode}
\hypersetup{
  pdftitle={The Born Rule for Projective Measurements from Metric Non-Expansion and Calibration},
  pdfauthor={Aaron Lax},
  pdfsubject={Born-rule rigidity for calibrated projective readouts and its finite-POVM boundary},
  pdfkeywords={Born rule, Fisher information, Fubini-Study metric, operational calibration, projective measurements, POVM boundary, classical post-processing, effect noncontextuality}
}

\newtheorem{lemma}{Lemma}
\newtheorem{theorem}{Theorem}
\newtheorem{proposition}{Proposition}
\newtheorem{corollary}{Corollary}
\theoremstyle{remark}
\newtheorem{remark}{Remark}

\newcommand{\C}{\mathbb{C}}
\newcommand{\R}{\mathbb{R}}
\newcommand{\CP}{\mathbb{CP}}
\newcommand{\dd}{\mathrm{d}}
\newcommand{\id}{\operatorname{id}}
\newcommand{\Var}{\operatorname{Var}}
\newcommand{\ket}[1]{\lvert #1\rangle}
\newcommand{\bra}[1]{\langle #1\rvert}
\newcommand{\braket}[2]{\langle #1 \mid #2\rangle}
\newcommand{\norm}[1]{\left\lVert #1 \right\rVert}
\newcommand{\inner}[2]{\left\langle #1,#2 \right\rangle}

\title{The Born Rule for Projective Measurements\\
       from Metric Non-Expansion and Calibration}
\author{Aaron Lax\\{\small Independent Researcher}}
\date{July 2026}

\begin{document}
\maketitle

\begin{abstract}
Why should a calibrated quantum measurement report Born probabilities?
Fix a finite dimension $d\geq2$ and a projective measurement $M$ on $\C^d$, and let $P_M$ be a map assigning outcome probabilities to pure states.
I prove that three per-apparatus hypotheses force $P_M$ to be the Born readout:
the square-root readout $R_M=\sqrt{P_M}$ is absolutely continuous along Fubini--Study geodesics;
the classical Fisher speed of the output never exceeds the quantum Fisher speed of the input almost everywhere on smooth pure-state curves;
and the readout reports certainty on every state in each labeled eigenspace.
Taking square roots turns probabilities into coordinates on a spherical orthant, and calibration pins the vertices.
The metric hypotheses make $R_M$ globally $1$-Lipschitz from Fubini--Study to round distance, so the readout cannot move farther from any calibrated vertex: every coordinate is at least its Born value, both vectors have unit norm, and they are equal.
The hypotheses are per measurement; imposing them in every projective context yields the Born assignment on all projective measurements, at every $d\geq2$, with noncontextuality arriving as an output rather than an axiom.
The metric argument stops at projective measurements: for every non-projective POVM, an explicit family of non-Born readouts satisfies the same metric hypotheses even with the stated certainty-of-occurrence calibration.
\end{abstract}

\section{Introduction}

Why should a calibrated quantum measurement obey the Born rule?
Gleason's theorem answers by constraining probability assignments across every orthogonal resolution of the identity at once, and it fails at $d=2$ \cite{Gleason1957}; envariance and decision-theoretic routes add symmetry or rationality principles on top of the formalism \cite{Zurek2005,Wallace2009}.
This paper asks a narrower question.
Fix one projective measurement, calibrated so that its outcome labels mean what they say.
How much freedom does its pure-state readout retain?

The answer is none, given two metric conditions.
Suppose the square-root readout $R_M=\sqrt{P_M}$ is absolutely continuous along Fubini--Study geodesics, and suppose the readout never converts projective distinguishability into more classical distinguishability: its classical Fisher speed is bounded by the quantum Fisher speed almost everywhere along every smooth curve of pure states.
Then the readout is the Born readout, in every finite dimension $d\geq2$ and for projective measurements of arbitrary rank (Theorems~\ref{thm:pvm-readout} and~\ref{thm:rank-general}).

The proof has one picture.
Taking square roots turns probability vectors into points of a spherical orthant, and calibration pins the orthant's vertices.
The two metric hypotheses make the readout $1$-Lipschitz from Fubini--Study distance to round distance, and a non-expanding map cannot pull any point farther from a fixed vertex.
Every coordinate of the readout is therefore at least its Born value, and both vectors have unit norm, so they are equal.

All hypotheses are imposed one measurement at a time.
Imposing them in every projective context yields the Born assignment on all projective measurements, and agreement across contexts comes out as a consequence rather than an assumption (Corollary~\ref{cor:all-contexts}).

The argument works for projective measurements and stops there.
For every non-projective POVM, an explicit family of non-Born readouts satisfies the same metric hypotheses even with certainty-of-occurrence calibration on every outcome (Theorem~\ref{thm:povm-boundary}): unsharp outcomes do not supply enough certainty states to anchor the map.
Section~\ref{sec:povm-closure} records two conditional repairs and names the identification principle each one spends.

\section{Setup}\label{sec:setup}

Fix $d \geq 2$ and an orthonormal basis $\{\ket{e_1},\ldots,\ket{e_d}\}$ of $\C^d$.
Write $\CP^{d-1}$ for projective pure-state space and
\[
\Delta^{d-1}=\left\{u\in\R^d: u_i\geq 0,\ \sum_i u_i = 1\right\}
\]
for the closed probability simplex. Let
\[
M=\{\ket{e_1}\bra{e_1},\ldots,\ket{e_d}\bra{e_d}\}
\]
be the fixed rank-1 PVM in this basis, and let
\[
P_M:\CP^{d-1}\to\Delta^{d-1}
\]
be a probability readout map for $M$. Define its square-root readout
\[
R_M([\psi])_i = \sqrt{P_M([\psi])_i},
\]
so
\[
R_M:\CP^{d-1}\to S^{d-1}_+,
\qquad
S^{d-1}_+ = \left\{x\in\R^d: x_i\geq 0,\ \sum_i x_i^2=1\right\}.
\]

For a curve $\gamma:I\to\CP^{d-1}$, write $r=R_M\circ\gamma$ and define its upper metric speed and the associated extended readout Fisher speed by
\begin{equation}\label{eq:upper-speed}
 \operatorname{md}^{+}r(s)
 =\limsup_{\substack{h\to 0\\ s+h\in I}}
 \frac{d_{\mathrm{round}}(r(s+h),r(s))}{|h|},
 \qquad
 F_{\mathrm{cl}}(P_M\circ\gamma;s)
 =4\bigl(\operatorname{md}^{+}r(s)\bigr)^2.
\end{equation}
The value in \eqref{eq:upper-speed} may be infinite and does not presuppose absolute continuity of $r$.
If $r$ is absolutely continuous, its ordinary metric derivative $\operatorname{md}r$ exists almost everywhere, agrees there with $\operatorname{md}^{+}r$, and satisfies
\begin{equation}\label{eq:length-metric-derivative}
 L_{\mathrm{round}}(r)=\int_I \operatorname{md}r(s)\,\dd s;
\end{equation}
see Ambrosio--Gigli--Savar\'e \cite[Sec.~1.1]{AmbrosioGigliSavare2008}.
For a smooth input curve, write
\[
F_Q(\gamma;s)=4g_{\mathrm{FS},\gamma(s)}(\dot\gamma(s),\dot\gamma(s)).
\]

I use three hypotheses.

\medskip
\noindent\textbf{(H1) Square-root regularity.}
$R_M$ is continuous on $\CP^{d-1}$ and absolutely continuous along affinely parameterized Fubini--Study geodesic segments.

\medskip
\noindent\textbf{(H2) Universal Fisher non-expansion.}
For every smooth pure-state curve $\gamma:I\to\CP^{d-1}$,
\[
F_{\mathrm{cl}}(P_M\circ\gamma;s) \leq F_Q(\gamma;s)
\qquad\text{for almost every }s\in I.
\]

\medskip
\noindent\textbf{(H3) Operational calibration.}
For the basis preparations,
\[
P_M([e_i]) = \delta_i, \qquad i=1,\ldots,d,
\]
where $\delta_i$ is the $i$th simplex vertex.

The primary scope is one dimension $d$, one fixed rank-1 PVM $M$, and pure states only; Section~\ref{sec:rank-general} removes the rank restriction.
Every hypothesis is per measurement, and no cross-context axiom enters before the conditional POVM extensions of Section~\ref{sec:povm-closure}, on which the projective results do not depend.

For later use, recall two standard metric identities. On the open simplex
\[
\Delta^{\circ}_{d-1}=\{u\in\Delta^{d-1}:u_i>0\text{ for all }i\},
\]
the Fisher metric is
\begin{equation}\label{eq:fisher-metric}
 g^F_u(v,v)=\sum_i \frac{v_i^2}{u_i},\qquad \sum_i v_i=0.
\end{equation}
On pure states,
\begin{equation}\label{eq:fq-fs}
 F_Q=4g_{\mathrm{FS}},
\end{equation}
where $g_{\mathrm{FS}}$ is the Fubini--Study metric \cite{Wootters1981,BraunsteinCaves1994}.
If $p_i=P_M([\psi])_i$ and $R_{M,i}=\sqrt{p_i}$, then at points where the square-root output is differentiable,
\begin{equation}\label{eq:fcl-sqrt}
 F_{\mathrm{cl}}(p;s)=4\left\|\frac{\dd R_M}{\dd s}\right\|_{\mathrm{round}}^2
 =\sum_i \frac{(\dd p_i/\dd s)^2}{p_i},
\end{equation}
where $\norm{\cdot}_{\mathrm{round}}$ on a tangent vector denotes the ambient Euclidean norm restricted to the sphere's tangent space, i.e.\ the round Riemannian norm, and the final expression applies on the open positive part.
Equation \eqref{eq:fcl-sqrt} is the square-root form of classical Fisher information. It is the classical Fisher--Rao/Hellinger identity on the simplex; see Rao \cite{Rao1945} and Amari--Nagaoka \cite{AmariNagaoka2000}.
The displayed elementary formula $F_{\mathrm{cl}}(p)=\sum_i(\dd p_i/\dd s)^2/p_i$ is under-specified at boundary points where some $p_i$ vanishes;
throughout this paper the defining expression is the upper-metric-speed convention \eqref{eq:upper-speed}, which agrees almost everywhere with the ordinary square-root velocity on absolutely continuous output curves and recovers the elementary formula on the open positive part.
On the quantum-geometric side, the square-root/spherical chart used here is the projective-Hilbert geometry developed by Brody and Hughston \cite{BrodyHughston1998} as a Riemannian formulation of quantum mechanics on $\CP^{d-1}$.

\section{Status of the Hypotheses}\label{sec:hypotheses}

Both (H2) and (H3) are admissibility postulates on candidate readouts, not theorems inherited from quantum mechanics with Born already in place.
Each does distinct work in the proof, and each can be defended on its own geometric/operational grounds. I make their status explicit before using them.

\medskip
\noindent\textbf{(H2) and Braunstein--Caves.}
Braunstein and Caves \cite{BraunsteinCaves1994} prove $F_{\mathrm{cl}} \leq F_Q$ for measurement statistics generated by quantum POVMs with Born readout, optimized over POVMs, with saturation at the symmetric logarithmic derivative measurement.
Their inequality presupposes Born to generate the outcome distributions whose Fisher information is being bounded.
Hypothesis (H2) here applies the same inequality to candidate non-Born readouts $P_M$.
This is a new admissibility postulate, not a corollary of Braunstein--Caves.

The geometric defense of (H2) rests on the identity $F_Q\equiv 4g_{\mathrm{FS}}$ on pure states.
The right-hand side is fixed by the projective Fubini--Study geometry on $\CP^{d-1}$, independent of which readout one is contemplating.
Throughout, I take the Fubini--Study metric as primitive ray geometry, induced by the Hermitian inner product on the manifold of rays \cite{ProvostVallee1980,BrodyHughston1998}; I do not justify it operationally through statistical distinguishability.
Wootters' identification of statistical distance with the Fubini--Study angle presupposes Born-form outcome probabilities on the quantum side, so it cannot serve as an independent operational justification inside a Born derivation; I cite \cite{Wootters1981} only for the value of the angle identity.
$F_Q$ provides the admissibility benchmark imposed by (H2): it measures the intrinsic distinguishability of the state family in the underlying projective metric, and a readout exceeding it manufactures distinguishability the geometry does not supply, as the quartic escort below does.
Hypothesis (H2) says only that a candidate readout may discard distinguishability but cannot create more than the geometry supplies.
This is the natural admissibility condition for any information-extraction map, and it does not single out Born by itself.

\medskip
\noindent\textbf{Born membership and the phase-variance decomposition.}
The Born readout itself satisfies (H1)--(H3), and its exact relation to the bound in (H2) is an identity worth displaying: the deficit $F_Q-F_{\mathrm{cl}}$ is four times the variance of the phase velocities that the fixed basis cannot see.

\begin{lemma}[Phase-variance decomposition]\label{lem:phase-variance}
Let $B_M([\psi])_i=|\braket{e_i}{\psi}|^2$ be the Born readout for the fixed PVM $M$.
\begin{enumerate}
\item $B_M$ satisfies (H1), (H2), and (H3).
\item Let $[\psi(s)]$ be a smooth pure-state curve whose Born probabilities satisfy $p_i(s)>0$ for all $i$, and write $\braket{e_i}{\psi(s)}=\sqrt{p_i(s)}\,e^{i\varphi_i(s)}$ for a smooth normalized representative. Then
\begin{equation}\label{eq:phase-variance}
F_Q = F_{\mathrm{cl}}(p) + 4\Var_p(\dot\varphi),
\qquad
\Var_p(\dot\varphi)=\sum_i p_i\dot\varphi_i^2-\Bigl(\sum_i p_i\dot\varphi_i\Bigr)^{\!2}.
\end{equation}
In particular $F_{\mathrm{cl}}=F_Q$ along such a curve if and only if the phase velocities $\dot\varphi_i$ agree for all $i$, i.e.\ exactly when, up to a global phase, the phases $\varphi_i$ are constant in $s$.
\end{enumerate}
\end{lemma}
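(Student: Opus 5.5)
I would prove part 2 first, since the identity \eqref{eq:phase-variance} gives (H2) for the Born readout in one line, and handle part 1 afterwards.

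For part 2, fix a smooth normalized lift $\psi(s)$ with components $\psi_i=\sqrt{p_i}\,e^{i\varphi_i}$. I would use the standard formula for the Fubini--Study metric pulled back to a normalized lift:
\[
g_{\mathrm{FS}}(\dot\gamma,\dot\gamma)=\langle\dot\psi|\dot\psi\rangle-|\langle\psi|\dot\psi\rangle|^2 .
\]
This expression is independent of the choice of lift, and by \eqref{eq:fq-fs} it gives $F_Q=4\bigl(\langle\dot\psi|\dot\psi\rangle-|\langle\psi|\dot\psi\rangle|^2\bigr)$. Differentiating each component gives $\dot\psi_i=\bigl(\dot p_i/(2\sqrt{p_i})+i\sqrt{p_i}\,\dot\varphi_i\bigr)e^{i\varphi_i}$. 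From this,
\[
\langle\dot\psi|\dot\psi\rangle=\sum_i\frac{\dot p_i^2}{4p_i}+\sum_i p_i\dot\varphi_i^2 .
\]
Similarly, $\langle\psi|\dot\psi\rangle=\sum_i\dot p_i/2+i\sum_i p_i\dot\varphi_i=i\sum_i p_i\dot\varphi_i$, because $\sum_i p_i=1$ forces $\sum_i\dot p_i=0$. Subtracting and multiplying by $4$ gives $F_Q=\sum_i\dot p_i^2/p_i+4\Var_p(\dot\varphi)$. The first term equals $F_{\mathrm{cl}}$ by \eqref{eq:fcl-sqrt}, which applies because $R_M=(\sqrt{p_i})$ is smooth where every $p_i>0$.

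The equality clause then follows because a variance under a strictly positive weight vector vanishes exactly when the random variable is constant. So $F_{\mathrm{cl}}=F_Q$ iff $\dot\varphi_i=c(s)$ for all $i$. Multiplying the lift by $e^{-i\int c}$ leaves the ray unchanged and makes every phase constant, which is the stated global-phase formulation.

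For part 1, (H3) is immediate from $|\langle e_i|e_j\rangle|^2=\delta_{ij}$. For (H1), I would note that $R_{M,i}([\psi])=|\langle e_i|\psi\rangle|$ is continuous. Moreover, $\psi\mapsto(|\psi_i|)_i$ is $1$-Lipschitz from the unit sphere of $\C^d$ to $\R^d$ with chordal distances. Minimizing over phase representatives gives
\[
\|R_M([\psi])-R_M([\phi])\|\le\min_\theta\|\psi-e^{i\theta}\phi\| .
\]
The right-hand side is a chordal version of the Fubini--Study distance, bounded by a constant multiple of the FS angle. Hence $R_M$ is Lipschitz, so it is absolutely continuous along affinely parameterized geodesics.

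For (H2), I would again use the Lipschitz bound, but sharpened to constant $1$ between FS distance and round distance. Pick the optimal phase so that $\langle\phi|\psi\rangle\ge0$. Then
\[
\langle R(\psi),R(\phi)\rangle=\sum_i|\psi_i||\phi_i|\ge\Bigl|\sum_i\bar\psi_i\phi_i\Bigr|=\cos d_{\mathrm{FS}},
\]
so $d_{\mathrm{round}}(R(\psi),R(\phi))\le d_{\mathrm{FS}}([\psi],[\phi])$. Applying this at nearby parameter values gives $\operatorname{md}^+r(s)\le\|\dot\gamma(s)\|_{\mathrm{FS}}$ at every $s$, boundary points included. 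Squaring gives $F_{\mathrm{cl}}\le F_Q$ everywhere, not just almost everywhere. This route avoids the boundary issue that part 2 cannot cover, where some $p_i=0$.

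The main obstacle is keeping the metric normalizations straight. The FS angle $d_{\mathrm{FS}}=\arccos|\langle\psi|\phi\rangle|$ must correspond to the metric $g_{\mathrm{FS}}$ with $F_Q=4g_{\mathrm{FS}}$, and the round angle must correspond to $F_{\mathrm{cl}}=4(\operatorname{md}^+r)^2$. I would check this on a single geodesic in $\CP^1$ before committing to the constants.
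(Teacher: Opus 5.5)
Your proposal is correct. Part~2 and the equality clause are the same computation the paper does. Your proof of (H2), however, takes a genuinely different route.

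The paper argues infinitesimally. With a horizontal lift, $F_Q=4\sum_i|\dot z_i|^2$. Absolute continuity of $(|z_i|)_i$ lets it identify the upper metric speed with the ordinary round velocity almost everywhere, and the componentwise bound $\bigl|\dd|z_i|/\dd s\bigr|\leq|\dot z_i|$ finishes the argument.

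You instead prove the integrated statement first. The triangle inequality $\sum_i|\psi_i||\phi_i|\geq|\braket{\psi}{\phi}|$ holds without any phase choice. It says the Bhattacharyya coefficient of the Born outputs dominates the quantum overlap, so $d_{\mathrm{round}}(R_M([\psi]),R_M([\phi]))\leq d_{\mathrm{FS}}([\psi],[\phi])$ on all of $\CP^{d-1}$. The speed bound then follows because $d_{\mathrm{FS}}(\gamma(s+h),\gamma(s))$ is at most the length of $\gamma$ between $s$ and $s+h$.

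Your route buys several things:
\begin{itemize}
\item It gives the bound at every $s$ rather than almost every $s$, with no absolute-continuity bookkeeping and no special treatment of parameters where some $p_i=0$.
\item It shows directly that Born satisfies the global bound \eqref{eq:global-lip}, which Theorem~\ref{thm:pvm-readout} extracts from (H1)--(H2).
\item It already implies (H1), so your separate chordal argument is redundant.
\item It generalizes verbatim to any finite POVM via $|\braket{\psi}{\phi}|\leq\sum_a\norm{E_a^{1/2}\psi}\,\norm{E_a^{1/2}\phi}$. This gives the (H1)--(H2) part of Lemma~\ref{lem:born-membership-povm} in one stroke.
\end{itemize}

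The paper's infinitesimal route matches the differential form in which (H2) is posed. Its componentwise comparison $|\dot z_i|^2-(\dd|z_i|/\dd s)^2=p_i\dot\varphi_i^2$ is exactly the deficit that part~2 assembles into the phase variance in the horizontal gauge.

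One caveat on presentation: your opening remark that part~2 yields (H2) ``in one line'' holds only on full-support curves, as you later acknowledge. It is the global Lipschitz inequality that actually carries (H2).
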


\begin{proof}
(H3) is immediate, and (H1) holds because each $|\braket{e_i}{\psi(s)}|$ is locally Lipschitz in the parameter along smooth curves, hence absolutely continuous on compact parameter intervals.
For (H2), choose on any parameter interval a horizontal representative, $\braket{\psi}{\dot\psi}=0$, and set $z_i=\braket{e_i}{\psi}$.
Then $F_Q=4\braket{\dot\psi}{\dot\psi}=4\sum_i|\dot z_i|^2$.
The curve $(|z_i|)_i$ is absolutely continuous, so \eqref{eq:upper-speed} agrees almost everywhere with its ordinary round velocity; the inequalities $|\dd|z_i|/\dd s|\leq|\dot z_i|$ therefore give $F_{\mathrm{cl}}\leq F_Q$ almost everywhere.

For \eqref{eq:phase-variance}, differentiate $\psi(s)=\sum_i\sqrt{p_i(s)}\,e^{i\varphi_i(s)}\ket{e_i}$:
\[
\braket{\dot\psi}{\dot\psi}=\sum_i\Bigl(\frac{\dot p_i^2}{4p_i}+p_i\dot\varphi_i^2\Bigr),
\qquad
\braket{\psi}{\dot\psi}=\tfrac12\sum_i\dot p_i+i\sum_i p_i\dot\varphi_i=i\sum_i p_i\dot\varphi_i,
\]
using $\sum_i\dot p_i=0$. Hence
\[
F_Q=4\bigl(\braket{\dot\psi}{\dot\psi}-|\braket{\psi}{\dot\psi}|^2\bigr)
=\sum_i\frac{\dot p_i^2}{p_i}+4\Var_p(\dot\varphi)
=F_{\mathrm{cl}}(p)+4\Var_p(\dot\varphi).
\]
The variance vanishes if and only if $\dot\varphi_i$ is independent of $i$ on the full support, which is a global phase reparametrization.
\end{proof}

Consequently every interior simplex point and tangent pair $(p,v)$ is realized at $s=0$ by a constant-phase curve $\psi(s)=\sum_i\sqrt{p_i(s)}\,\ket{e_i}$ with $p(0)=p$, $\dot p(0)=v$, along which Born saturates $F_{\mathrm{cl}}=F_Q$.
This is the transfer device used for the escort corollary in Section~\ref{sec:corollaries}.

\medskip
\noindent\textbf{Metric, not estimator, use of Fisher.}
At differentiability points in the open simplex, classical Fisher information is the Fisher--Rao metric evaluated on tangent vectors, equivalently $4$ times the squared round velocity in the square-root chart.
Equation \eqref{eq:upper-speed} extends that square-root speed through the simplex boundary and to arbitrary output curves without assuming (H1).
Thus (H2) is not a claim about any finite-sample estimator of Fisher information, nor about reconstruction from sampled trajectories.
It is an almost-everywhere differential non-expansion condition between the projective Fubini--Study metric and the classical Fisher--Rao/Hellinger metric.
By \eqref{eq:fcl-sqrt} the classical Fisher--Rao metric is four times the round (Hellinger/Bhattacharyya) metric in the square-root chart, and $F_Q=4g_{\mathrm{FS}}$ by \eqref{eq:fq-fs}; the two factors of four cancel, so (H2) is equivalently an almost-everywhere \emph{Fubini--Study to Hellinger speed non-expansion} condition.
I use the standard normalizations on both sides.
On the classical side, by Cencov's theorem, extended by Campbell, the Fisher--Rao metric is the unique Riemannian metric on probability simplices, up to one global scale, that is monotone under congruent Markov embeddings \cite{Cencov1981,Campbell1986}; monotonicity therefore fixes the classical metric only up to scale, not its coefficient.
The scale in (H2) is fixed by adopting the angle conventions Fubini--Study distance $\arccos|\braket{\phi}{\psi}|$ and Hellinger (Bhattacharyya) distance $\arccos\sum_i\sqrt{p_iq_i}$: both take values in $[0,\pi/2]$, both vanish exactly on identical pairs, and both attain $\pi/2$ exactly on perfectly distinguishable pairs: orthogonal states, disjoint supports.
Under these standard round-sphere conventions, no adjustable constant appears in (H2); the matched coefficient is a normalization choice, not a consequence of Cencov's uniqueness theorem \cite{Bhattacharyya1943}.
The readout may discard local projective distinguishability but may not create greater statistical speed; (H1) is what upgrades that local bound to the distance bound used below.
Questions of sampling bias or estimator consistency enter only when one tries to estimate these geometric quantities from data; they are external to the rigidity theorem below.

\medskip
\noindent\textbf{(H1) lives in the square-root chart.}
Regularity is imposed on $R_M=\sqrt{P_M}$ rather than on $P_M$ because absolute continuity does not pass through the square root at the simplex boundary.
Take $d=2$ and the probability curve $p_1(s)=s^2\sin^2(1/s)$ for $s\neq 0$, $p_1(0)=0$, $p_2=1-p_1$: the derivative $2s\sin^2(1/s)-\sin(2/s)$ is bounded, so $p$ is Lipschitz, hence absolutely continuous, while $\sqrt{p_1(s)}=|s\sin(1/s)|$ has unbounded variation on every neighborhood of $s=0$ and is not absolutely continuous.
The converse direction is automatic: if $R_M$ is absolutely continuous along a geodesic segment then so is $P_M=R_M^2$, since products of bounded absolutely continuous functions are absolutely continuous.
So stating (H1) in the square-root chart is the strictly stronger choice, and it is the coherent one: by \eqref{eq:fcl-sqrt} the Fisher--Rao geometry \emph{is} the round geometry of the square-root variables, so the chart that carries the metric hypothesis (H2) must carry the regularity hypothesis as well.

\medskip
\noindent\textbf{(H3) is logically weaker than the full Born formula.}
(H3) fixes the readout on the $d$ basis preparations only. All rays outside the calibrated set $\{[e_1],\ldots,[e_d]\}$ are unconstrained by (H3) alone.
(H1) gives the local metric condition global force, (H2) supplies the geometric constraint on smooth input curves, (H3) anchors the readout at the $d$ simplex vertices, and Theorem~\ref{thm:pvm-readout} forces their joint consequence to be Born.

\medskip
\noindent\textbf{No hypothesis follows from the other two.}
The separation of the hypotheses is visible in four simple readouts.
Each hypothesis fails for a readout satisfying the other two.
For (H1), the staircase below shows specifically that absolute continuity, which permits integration of the almost-everywhere speed bound, cannot be weakened to continuity.

\smallskip
\noindent\emph{Uniform readout (fails (H3) only).}
The constant readout $P_M([\psi])=(1/d,\ldots,1/d)$ satisfies (H1), and satisfies (H2) trivially: $F_{\mathrm{cl}}=0\leq F_Q$ on every smooth curve, since the readout discards all distinguishability.
It fails (H3): basis preparations $\ket{e_i}$ are not reported with certainty.

\smallskip
\noindent\emph{Permuted Born (fails (H3) only).}
For any non-trivial permutation $\sigma$ of $\{1,\ldots,d\}$, set
$P_M([\psi])_i=|\braket{e_{\sigma(i)}}{\psi}|^2$.
This readout is Born up to relabeling, and coordinate permutations are isometries of the round metric, so by Lemma~\ref{lem:phase-variance} it satisfies (H1) and (H2) exactly as Born does: $F_{\mathrm{cl}}\leq F_Q$ along every smooth pure-state curve, with equality on full-support curves precisely when the phases can be taken constant \eqref{eq:phase-variance}.
But for the labeled PVM $M=\{\ket{e_i}\bra{e_i}\}$ in its declared labeling,
$P_M([e_i])=\delta_{\sigma^{-1}(i)}\neq \delta_i$ whenever $\sigma(i)\neq i$, so it fails (H3).

\smallskip
\noindent\emph{Quartic escort (fails (H2) only).}
Set
\[
P_M([\psi])_i=\frac{|\braket{e_i}{\psi}|^4}{\sum_j|\braket{e_j}{\psi}|^4},
\]
the escort readout with $f(t)=t^2$ in the notation of Section~\ref{sec:corollaries}.
Its square-root readout has coordinates
$R_{M,i}=|\braket{e_i}{\psi}|^2/\bigl(\sum_j|\braket{e_j}{\psi}|^4\bigr)^{1/2}$,
which are smooth on $\CP^{d-1}$; hence (H1) holds, and calibration gives (H3).
It violates (H2): writing $T(u)_i=u_i^2/\sum_j u_j^2$ on the simplex, at the barycenter $u^\ast=(1/d,\ldots,1/d)$ one computes $\dd T|_{u^\ast}\,v=2v$ on simplex tangent vectors, so $T$ fixes $u^\ast$ and quadruples the Fisher quadratic form there.
Choose $v\neq0$ and a full-support constant-phase smooth curve through the barycenter with tangent $v$.
It saturates $F_{\mathrm{cl}}=F_Q$ for Born by Lemma~\ref{lem:phase-variance}, while the quartic readout has $F_{\mathrm{cl}}=4F_Q>F_Q$ at $s=0$.
The two Fisher speeds vary continuously near the barycenter, so the strict inequality persists on an open parameter interval and violates the almost-everywhere condition (H2).

\smallskip
\noindent\emph{Staircase readout (fails (H1) only; $d=2$).}
Fix $d=2$, write $\theta=d_{\mathrm{FS}}([\psi],[e_1])\in[0,\pi/2]$, and fix $0<\theta_1<\theta_2<\pi/2$.
Let $C:[0,1]\to[0,1]$ be the Cantor function \cite{Dovgoshey2006}, and let $c:[0,\pi/2]\to[0,\pi/2]$ equal the identity outside $(\theta_1,\theta_2)$ and
\[
c(\theta)=\theta_1+(\theta_2-\theta_1)\,C\!\left(\frac{\theta-\theta_1}{\theta_2-\theta_1}\right)
\]
inside, so that $c$ is continuous, non-decreasing, and singular on the band ($c'=0$ almost everywhere there). Set
\[
P_M([\psi])=\bigl(\cos^2 c(\theta),\ \sin^2 c(\theta)\bigr).
\]
Because the outer pieces are Lipschitz and match the rescaled Cantor piece at $\theta_1$ and $\theta_2$, after enlarging the H\"older constant the patched map $c$ is globally $\alpha$-H\"older on $[0,\pi/2]$, where $\alpha=\log 2/\log 3$ \cite{Dovgoshey2006}.
This readout is continuous on $\CP^{1}$, calibrated ($c(0)=0$ and $c(\pi/2)=\pi/2$), and not Born.
The upper-metric-speed definition \eqref{eq:upper-speed} does not presuppose absolute continuity, and this readout satisfies (H2) along every smooth curve.
Indeed, $R_M\circ\gamma=(\cos(c\circ\theta),\sin(c\circ\theta))$, so its round upper metric speed equals $\operatorname{md}^{+}(c\circ\theta)$.
For parameters with $\theta(s)\notin[\theta_1,\theta_2]$, $c$ is locally the identity and this upper speed is at most $|\theta'(s)|\leq\norm{\gamma'(s)}_{\mathrm{FS}}$ almost everywhere, since the distance function $\theta$ is $1$-Lipschitz along curves.
For parameters with $\theta(s)\in[\theta_1,\theta_2]$, the curve is away from $[e_1]$ and $[e_2]$, so $\theta(s)$ is smooth: where $\theta'\neq 0$, the inverse function theorem makes $\theta$ locally bi-Lipschitz, so the preimage of the null set on which $c'$ fails to vanish is null, and the derivative of $c\circ\theta$ exists and equals $c'(\theta)\,\theta'=0$ almost everywhere; where $\theta'=0$, the Taylor estimate $|\Delta\theta|=O(h^2)$ combined with the global $\alpha$-H\"older modulus of $c$ gives $|\Delta(c\circ\theta)|=O(h^{2\alpha})$ with $2\alpha>1$, hence upper metric speed zero.
Together with $F_Q(\gamma;s)=4\norm{\gamma'(s)}_{\mathrm{FS}}^2$, these estimates prove (H2).
Yet $R_M\circ\gamma$ is not absolutely continuous along the $[e_1]$--$[e_2]$ geodesic: the staircase transports the readout across the band while registering zero Fisher velocity almost everywhere, so (H1) fails.
Consistently with Corollary~\ref{cor:contrapositive} below, this readout does violate the global bound \eqref{eq:global-lip} (over the $n$th-stage Cantor intervals it expands round distance against Fubini--Study distance at rate $(3/2)^n$) while keeping the almost-everywhere differential bound; that combination is exactly what failing (H1) permits.
Absolute continuity is what upgrades almost-everywhere velocity bounds to the length bound in the proof of Theorem~\ref{thm:pvm-readout}; it cannot be weakened to continuity.

\smallskip
These examples show that none of the three hypotheses follows from the other two: (H2) is metric admissibility (the quartic escort fails it while smooth and calibrated), (H3) is labeling (uniform and permuted Born fail it while metrically admissible), and (H1) is the regularity that gives the metric condition global grip (the staircase readout moves without registering almost-everywhere Fisher velocity).
Born satisfies all three (Lemma~\ref{lem:phase-variance}), and Theorem~\ref{thm:pvm-readout} shows nothing else does.

\medskip
\noindent\textbf{Defending (H3): the equivalence-class route is the minimal route.}
Two routes support (H3) operationally. The minimal one is definitional:
a PVM apparatus with outcomes labeled by the rank-1 projectors $\{\ket{e_i}\bra{e_i}\}$ is, by definition, the equivalence class of operational devices whose statistics on the basis preparations $\ket{e_i}$ are the indicator distributions $\delta_i$ for every $i$.
Under this definition (H3) is part of what it means to call the apparatus $M$, restricted to basis-state preparations only.
A second route appeals to repeatability and the von Neumann projection postulate; it reaches the same calibration but under a stronger post-measurement assumption.
I treat the equivalence-class route as the minimal route and the repeatability route as a secondary sanity check.
Both routes are analyzed in standard measurement-theory references; see von Neumann \cite{vonNeumann1932}, Busch--Lahti--Mittelstaedt \cite{BuschLahtiMittelstaedt1996}, and Ozawa \cite{Ozawa1984}.

The conjunction (H1)--(H3) is then: the local speed bound has global force, the readout creates no distinguishability beyond what the projective geometry supplies, and the labels mean what they say.
Theorem~\ref{thm:pvm-readout} below shows that nothing else is admissible.

\section{Vertex Rigidity Lemma}

The geometric closure step is a sphere statement.

\begin{lemma}[Vertex Rigidity]\label{lem:vertex-rigidity}
Let
\[
\Psi:S^{d-1}_+\to S^{d-1}_+
\]
be round-metric $1$-Lipschitz and satisfy
\[
\Psi(e_i)=e_i,\qquad i=1,\ldots,d,
\]
for the coordinate vertices $e_i$. Then $\Psi=\id$.
\end{lemma}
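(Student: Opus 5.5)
The argument compares distances to the fixed vertices, exactly as the introduction's picture suggests. Fix $x\in S^{d-1}_+$ and set $y=\Psi(x)$. On the unit sphere the round distance between unit vectors is $d_{\mathrm{round}}(a,b)=\arccos\inner{a}{b}$. For a coordinate vertex this gives $d_{\mathrm{round}}(x,e_i)=\arccos x_i$, with $x_i\in[0,1]$ because $x$ lies in the nonnegative orthant.

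The $1$-Lipschitz hypothesis and $\Psi(e_i)=e_i$ give
\[
\arccos y_i=d_{\mathrm{round}}(\Psi(x),\Psi(e_i))\leq d_{\mathrm{round}}(x,e_i)=\arccos x_i .
\]
Since $\arccos$ is strictly decreasing on $[0,1]$, this yields $y_i\geq x_i\geq 0$ for every $i$. This is the step that uses nonnegativity of both points: it keeps $x_i,y_i$ in the range where $\arccos$ is monotone and where each coordinate is the cosine of the vertex distance.

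To close, sum the squares. Because $y_i\ge x_i\ge0$, we have $y_i^2\geq x_i^2$ coordinatewise. Yet $\sum_i y_i^2=1=\sum_i x_i^2$, so every inequality $y_i^2\geq x_i^2$ must be an equality. With nonnegativity this forces $y_i=x_i$ for all $i$, hence $\Psi(x)=x$. As $x$ was arbitrary, $\Psi=\id$.

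There is no real obstacle. The only points needing care are the conventions. First, the round metric on $S^{d-1}_+$ must be the intrinsic great-circle distance $\arccos\inner{\cdot}{\cdot}$, which is the ambient sphere's geodesic distance; the orthant is geodesically convex, so restricting to it changes nothing. Second, the coordinates must be used in their nonnegative form. No continuity beyond the Lipschitz bound is needed, and the Lipschitz bound is used only along the $d$ pairs $(x,e_i)$.
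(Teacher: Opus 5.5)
Your proposal is correct and follows the paper's proof: both use the $1$-Lipschitz bound only on the pairs $(x,e_i)$ and the identity $\cos d_{\mathrm{round}}(y,e_i)=y_i$ to get componentwise dominance $\Psi(x)_i\geq x_i$. Both then conclude from equal unit norms and nonnegativity of the coordinates. Your ``sum of squares'' closing and the paper's factorization $\bigl(\Psi(x)_i-x_i\bigr)\bigl(\Psi(x)_i+x_i\bigr)$ are the same argument.
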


\begin{proof}
Fix $x\in S^{d-1}_+$. Since $\Psi$ is $1$-Lipschitz and fixes the vertices,
\begin{equation}\label{eq:lip-to-vertex}
 d_{\mathrm{round}}(\Psi(x),e_i)\leq d_{\mathrm{round}}(x,e_i).
\end{equation}
On the unit sphere,
\begin{equation}\label{eq:cos-identity}
 \cos d_{\mathrm{round}}(y,e_i)=y_i
\end{equation}
for any $y\in S^{d-1}_+$. Cosine is strictly decreasing on $[0,\pi]$, so \eqref{eq:lip-to-vertex} and \eqref{eq:cos-identity} give
\begin{equation}\label{eq:dominance}
 \Psi(x)_i\geq x_i,\qquad i=1,\ldots,d.
\end{equation}
Both $x$ and $\Psi(x)$ lie on the unit sphere, hence
\begin{equation}\label{eq:unit-norm}
 \sum_i \Psi(x)_i^2=1=\sum_i x_i^2.
\end{equation}
Componentwise dominance plus equality of the sums of squares forces $\Psi(x)_i=x_i$ for every $i$.
Subtract \eqref{eq:unit-norm}:
\[
0=\sum_i\bigl(\Psi(x)_i^2-x_i^2\bigr)
 =\sum_i\bigl(\Psi(x)_i-x_i\bigr)\bigl(\Psi(x)_i+x_i\bigr).
\]
Each summand is non-negative: the first factor is non-negative by \eqref{eq:dominance}, and the second is non-negative because both $\Psi(x)_i$ and $x_i$ lie in the closed positive orthant.
A sum of non-negative terms equals zero only when every term is zero.
If $\Psi(x)_i+x_i>0$, then $\Psi(x)_i=x_i$; if $\Psi(x)_i+x_i=0$, then both terms are zero because neither is negative.
Therefore $\Psi(x)=x$ for all $x$, so $\Psi=\id$.
\end{proof}

The closing argument is elementary spherical geometry, and the lemma should be read as a packaging of standard facts about the round orthant, stated in the form the readout theorem needs.
The adjacent characterization literature answers a different question: Cencov \cite{Cencov1981} characterizes the classical Fisher metric on probability simplices, up to scale, by invariance under congruent Markov embeddings, and Campbell \cite{Campbell1986} extends this characterization framework to positive cones; Petz \cite{Petz1996} instead characterizes a non-unique family of monotone metrics on positive matrix spaces via operator-monotone functions. These are metric characterization or classification results, not rigidity theorems for non-expanding self-maps. For the surrounding geometry see \cite{AmariNagaoka2000,BengtssonZyczkowski2006}.

\section{Simplex Fisher Contraction Rigidity}

The simplex theorem isolates the pure information-geometric engine.
It is the square-root-chart formulation of the same spherical rigidity: the main readout theorem below runs the vertex argument directly on the sphere, and the simplex form is what the escort corollaries of Section~\ref{sec:corollaries} consume.

\begin{theorem}[Simplex Fisher contraction rigidity]\label{thm:simplex-rigidity}
Let
\[
T:\Delta^{d-1}\to\Delta^{d-1}
\]
be continuous on the closed simplex, interior-preserving,
$T(\Delta^{\circ}_{d-1})\subseteq\Delta^{\circ}_{d-1}$,
$C^1$ on $\Delta^{\circ}_{d-1}$, and Fisher non-expanding in the sense that, for every $u\in\Delta^{\circ}_{d-1}$ and every $v\in\R^d$ with $\sum_i v_i=0$,
\begin{equation}\label{eq:fisher-nonexp}
 g^F_{T(u)}\bigl(\dd T|_u\cdot v,\dd T|_u\cdot v\bigr)\leq g^F_u(v,v),
\end{equation}
and vertex-fixing,
\begin{equation}\label{eq:vertex-fix}
 T(e_i)=e_i,\qquad i=1,\ldots,d.
\end{equation}
Then $T=\id$ on $\Delta^{d-1}$.
\end{theorem}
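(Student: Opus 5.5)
The plan is to transport $T$ to the square-root chart and invoke Lemma~\ref{lem:vertex-rigidity}. Let $\sigma:\Delta^{d-1}\to S^{d-1}_+$, $\sigma(u)_i=\sqrt{u_i}$. This map is a homeomorphism with inverse $x\mapsto(x_i^2)_i$. Set $\Psi=\sigma\circ T\circ\sigma^{-1}$. Then $\Psi$ is continuous on $S^{d-1}_+$, maps the open orthant $S^{\circ}_+=\sigma(\Delta^{\circ}_{d-1})$ into itself, is $C^1$ there, and fixes each coordinate vertex $e_i=\sigma(e_i)$ by \eqref{eq:vertex-fix}. By \eqref{eq:fcl-sqrt}, the pullback of the round metric under $\sigma^{-1}$ on the interior is $\tfrac14 g^F$. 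Since the factor $\tfrac14$ appears on both sides, \eqref{eq:fisher-nonexp} becomes the pointwise bound $\|\dd\Psi|_x w\|_{\mathrm{round}}\le\|w\|_{\mathrm{round}}$ for all $x\in S^{\circ}_+$ and all tangent $w$. It therefore suffices to show that $\Psi$ is round-metric $1$-Lipschitz on all of $S^{d-1}_+$. Lemma~\ref{lem:vertex-rigidity} then gives $\Psi=\id$, hence $T=\sigma^{-1}\circ\Psi\circ\sigma=\id$.

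\textbf{Interior Lipschitz bound.} Take $x,y\in S^{\circ}_+$ and let $\gamma$ be the minimizing great-circle arc between them. Because the open orthant is geodesically convex (the arc consists of normalized positive combinations of $x$ and $y$), $\gamma$ stays in $S^{\circ}_+$. The curve $\Psi\circ\gamma$ is $C^1$, and its speed is bounded by that of $\gamma$ from the differential bound. So
\[
d_{\mathrm{round}}(\Psi(x),\Psi(y))\le L(\Psi\circ\gamma)\le L(\gamma)=d_{\mathrm{round}}(x,y).
\]
This uses the fact that round distance on $S^{d-1}_+$ is the great-circle angle, which never exceeds the length of any curve joining the endpoints within the sphere.

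\textbf{Extension to the boundary.} This is the step I expect to need care, because $T$ is only assumed continuous, not differentiable, on $\partial\Delta^{d-1}$. The Fisher metric also degenerates there, so there is no differential bound available at boundary points. The plan is to use density instead. Every point of $S^{d-1}_+$, including the vertices, is a limit of interior points; for example $x_\epsilon=(x+\epsilon\mathbf 1)/\|x+\epsilon\mathbf 1\|$ with $\epsilon\downarrow0$. Both sides of the interior inequality are continuous in $(x,y)$, because $\Psi$ and $d_{\mathrm{round}}$ are continuous on the closed orthant. Passing to the limit therefore gives the $1$-Lipschitz bound on all of $S^{d-1}_+$. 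Interior preservation is used only to make sure $\dd\Psi$ is evaluated where the Fisher form is defined and $\sigma$ is smooth, so that the chain rule and \eqref{eq:fcl-sqrt} apply along $\gamma$.

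With the global $1$-Lipschitz property and vertex-fixing established, Lemma~\ref{lem:vertex-rigidity} finishes the proof.
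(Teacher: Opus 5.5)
Your proposal is correct and follows the paper's proof essentially step for step: conjugate by the square-root chart, integrate the infinitesimal round-metric bound along the normalized-chord great-circle arc (which stays in the open orthant), extend the $1$-Lipschitz inequality to the closed orthant by density and continuity, and conclude with the Vertex Rigidity Lemma. The only blemish is wording: it is the pullback of the round metric under $\sigma$ (not $\sigma^{-1}$) that equals $\tfrac14 g^F$, which does not affect the argument.
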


Interior preservation is what makes the left side of \eqref{eq:fisher-nonexp} well-defined: the Fisher form \eqref{eq:fisher-metric} lives on the open simplex, and the differentiability of the conjugated map below is taken at interior image points.

\begin{proof}
Define the square-root chart
\[
\Phi(u)=(\sqrt{u_1},\ldots,\sqrt{u_d})
\]
from $\Delta^{\circ}_{d-1}$ to the open positive spherical orthant
\[
S^{d-1,\circ}_+=\left\{x\in\R^d:x_i>0,\ \sum_i x_i^2=1\right\}.
\]
For tangent vectors $v$,
\begin{equation}\label{eq:chart-isometry}
4\inner{\dd\Phi|_u\cdot v}{\dd\Phi|_u\cdot v}_{\mathrm{round}}=g^F_u(v,v).
\end{equation}
So the conjugate map
\[
\Psi=\Phi\circ T\circ\Phi^{-1}
\]
is infinitesimally $1$-Lipschitz on $S^{d-1,\circ}_+$ for the round metric.
For interior points $x,y\in S^{d-1,\circ}_+$, take the normalized chord
\[
\gamma(t)=\frac{(1-t)x+ty}{\norm{(1-t)x+ty}},\qquad t\in[0,1].
\]
Every component of $(1-t)x+ty$ is strictly positive, so $\gamma$ stays in $S^{d-1,\circ}_+$ and the denominator never vanishes.
Moreover $x\cdot y=\sum_i x_i y_i>0$, hence $x$ and $y$ lie in a common open hemisphere.
Therefore $\gamma$ parameterizes the unique minimizing great-circle arc, with $L(\gamma)=d_{\mathrm{round}}(x,y)$.
Integrating the differential inequality along $\gamma$ gives
\begin{equation}\label{eq:round-lip}
 d_{\mathrm{round}}(\Psi(x),\Psi(y))\leq d_{\mathrm{round}}(x,y)
\end{equation}
for interior points.
Continuity of $T$ extends $\Psi$ to the closed orthant $S^{d-1}_+$.
For boundary points $x',y'\in S^{d-1}_+$, choose interior sequences $x_n\to x'$ and $y_n\to y'$ and pass to the limit in \eqref{eq:round-lip} using continuity of $d_{\mathrm{round}}$ and $\Psi$.
Thus \eqref{eq:round-lip} holds on the closed orthant as well. By \eqref{eq:vertex-fix}, $\Psi(e_i)=e_i$.
The Vertex Rigidity Lemma then gives $\Psi=\id$, hence $T=\id$ after conjugating back through $\Phi$.
\end{proof}

The theorem is dimensionwise and geometric. It says nothing by itself about quantum measurements until a readout map is tied to a simplex map or to a square-root map anchored at calibrated vertices.

\section{PVM Readout Rigidity}

The proof has two moves: integrate the almost-everywhere speed bound of (H2) along minimizing geodesics into a global Lipschitz bound, then let the calibrated vertices force equality coordinate by coordinate.

\begin{theorem}[PVM readout rigidity]\label{thm:pvm-readout}
Fix $d\geq 2$ and the rank-1 PVM $M=\{\ket{e_i}\bra{e_i}\}$.
Let $P_M:\CP^{d-1}\to\Delta^{d-1}$ be a readout map such that:
\begin{enumerate}
\item $R_M=\sqrt{P_M}$ is continuous on $\CP^{d-1}$ and absolutely continuous along affinely parameterized Fubini--Study geodesic segments;
\item for every smooth pure-state curve $\gamma:I\to\CP^{d-1}$,
\[
F_{\mathrm{cl}}(P_M\circ\gamma;s)\leq F_Q(\gamma;s)
\qquad\text{for almost every }s\in I;
\]
\item $P_M([e_i])=\delta_i$ for $i=1,\ldots,d$.
\end{enumerate}
Then
\[
P_M([\psi])_i=|\braket{e_i}{\psi}|^2
\]
for every pure state $[\psi]\in\CP^{d-1}$ and every $i$.
\end{theorem}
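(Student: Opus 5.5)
The plan is to show first that $R_M$ is globally $1$-Lipschitz from $(\CP^{d-1},d_{\mathrm{FS}})$ to $(S^{d-1}_+,d_{\mathrm{round}})$:
\begin{equation}\label{eq:global-lip}
d_{\mathrm{round}}(R_M([\phi]),R_M([\psi]))\leq d_{\mathrm{FS}}([\phi],[\psi]).
\end{equation}
I then run the vertex argument of Lemma~\ref{lem:vertex-rigidity} directly on the sphere, comparing $R_M([\psi])$ with the Born square-root vector $b([\psi])_i=|\braket{e_i}{\psi}|$.

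For the Lipschitz step, fix $[\phi],[\psi]$ and take an affinely parameterized minimizing Fubini--Study geodesic $\gamma:[0,L]\to\CP^{d-1}$ between them, with unit speed and $L=d_{\mathrm{FS}}([\phi],[\psi])$. This curve is smooth, so hypothesis 2 applies. Since $F_Q(\gamma;s)=4g_{\mathrm{FS}}(\dot\gamma,\dot\gamma)=4$, it gives $\operatorname{md}^{+}(R_M\circ\gamma)(s)\leq 1$ for almost every $s$. By hypothesis 1, $r=R_M\circ\gamma$ is absolutely continuous. Hence $\operatorname{md}r=\operatorname{md}^+r$ almost everywhere, and \eqref{eq:length-metric-derivative} gives $L_{\mathrm{round}}(r)=\int_0^L\operatorname{md}r\,\dd s\leq L$. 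The round distance between the endpoints is at most the length of any connecting curve, so \eqref{eq:global-lip} follows.

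This is the step where the hypotheses do real work. The staircase example shows that without absolute continuity the almost-everywhere bound does not integrate. I expect the delicate points to be justifying the identification of $\operatorname{md}^+$ with the metric derivative, and checking that $r$ is rectifiable in the round metric and not just the ambient one. The round and chordal metrics are locally equivalent, and the citation to Ambrosio--Gigli--Savar\'e handles the rest.

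Next, apply \eqref{eq:global-lip} with $[\phi]=[e_i]$. By hypothesis 3, $R_M([e_i])=e_i$. Using \eqref{eq:cos-identity}, $d_{\mathrm{round}}(y,e_i)=\arccos y_i$ for $y\in S^{d-1}_+$, and $d_{\mathrm{FS}}([\psi],[e_i])=\arccos|\braket{e_i}{\psi}|$. Monotonicity of arccos then gives $R_M([\psi])_i\geq|\braket{e_i}{\psi}|$ for every $i$.

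Both $R_M([\psi])$ and $b([\psi])$ are unit vectors with nonnegative entries, because $\sum_i|\braket{e_i}{\psi}|^2=1$. The factorization argument closing the proof of Lemma~\ref{lem:vertex-rigidity} then applies verbatim: write $\sum_i(R_i-b_i)(R_i+b_i)=0$ with every term nonnegative. This forces $R_M([\psi])_i=|\braket{e_i}{\psi}|$, and squaring gives $P_M([\psi])_i=|\braket{e_i}{\psi}|^2$.

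An equivalent packaging is to note that $b$ is a submetry onto $S^{d-1}_+$, so one can invoke the Vertex Rigidity Lemma for a map $\Psi$ with $R_M=\Psi\circ b$. The direct coordinate argument avoids having to prove that $R_M$ factors through $b$, so I will use the direct argument.
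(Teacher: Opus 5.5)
Your proposal is correct and follows the paper's proof essentially step for step. You integrate the almost-everywhere bound from (H2) along a minimizing Fubini--Study geodesic, using the absolute continuity in (H1), to obtain the global $1$-Lipschitz estimate; you then specialize it to the calibrated vertices to get $R_M([\psi])_i\geq|\braket{e_i}{\psi}|$, and conclude from the equal unit norms, where the only cosmetic difference is your unit-speed parameterization ($F_Q\equiv 4$) in place of the paper's general $\norm{\gamma'(s)}_{\mathrm{FS}}$.
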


\begin{proof}
Choose any minimizing Fubini--Study geodesic $\gamma:[a,b]\to\CP^{d-1}$ from $[\psi]$ to $[\phi]$.
Such geodesics exist because $\CP^{d-1}$ is compact and complete; at the cut locus they need not be unique, but each has length $d_{\mathrm{FS}}([\psi],[\phi])$.
By (H1), $r=R_M\circ\gamma$ is absolutely continuous, so its ordinary metric derivative exists almost everywhere and agrees there with the upper metric speed in \eqref{eq:upper-speed}.
Equations \eqref{eq:fq-fs} and \eqref{eq:upper-speed}, together with (H2), therefore give
\begin{equation}\label{eq:speed-bound}
\operatorname{md}r(s)
\leq
\norm{\gamma'(s)}_{\mathrm{FS}}
\qquad\text{for almost every }s\in[a,b].
\end{equation}
The length formula \eqref{eq:length-metric-derivative} and \eqref{eq:speed-bound} give
\[
\begin{aligned}
 d_{\mathrm{round}}(R_M([\psi]),R_M([\phi]))
 &\leq L_{\mathrm{round}}(r)\\
 &= \int_a^b \operatorname{md}r(s)\,\dd s\\
 &\leq \int_a^b \norm{\gamma'(s)}_{\mathrm{FS}}\,\dd s\\
 &=L_{\mathrm{FS}}(\gamma)=d_{\mathrm{FS}}([\psi],[\phi]).
\end{aligned}
\]
Therefore
\begin{equation}\label{eq:global-lip}
 d_{\mathrm{round}}(R_M([\psi]),R_M([\phi]))\leq d_{\mathrm{FS}}([\psi],[\phi])
\end{equation}
for all pure states $[\psi],[\phi]$.
Set $[\phi]=[e_i]$. By (H3),
\begin{equation}\label{eq:calib-vertex}
 R_M([e_i])=e_i.
\end{equation}
The Fubini--Study distance to a basis ray is
\begin{equation}\label{eq:fs-vertex-distance}
 d_{\mathrm{FS}}([\psi],[e_i])=\arccos\bigl(|\braket{e_i}{\psi}|\bigr).
\end{equation}
Combining \eqref{eq:global-lip}, \eqref{eq:calib-vertex}, and \eqref{eq:fs-vertex-distance},
\begin{equation}\label{eq:vertex-bound}
 d_{\mathrm{round}}(R_M([\psi]),e_i)\leq \arccos\bigl(|\braket{e_i}{\psi}|\bigr).
\end{equation}
Applying cosine gives
\begin{equation}\label{eq:cos-dominance}
 R_M([\psi])_i\geq |\braket{e_i}{\psi}|,\qquad i=1,\ldots,d.
\end{equation}
Both vectors $R_M([\psi])$ and $\bigl(|\braket{e_i}{\psi}|\bigr)_i$ lie in $S^{d-1}_+$, and both have squared coordinates summing to $1$.
Therefore the componentwise inequalities in \eqref{eq:cos-dominance} are equalities:
\begin{equation}\label{eq:amp-equal}
 R_M([\psi])_i=|\braket{e_i}{\psi}|.
\end{equation}
Squaring \eqref{eq:amp-equal} yields
\begin{equation}\label{eq:born}
 P_M([\psi])_i=|\braket{e_i}{\psi}|^2.
\end{equation}
So the readout is the Born readout for the fixed PVM $M$.
\end{proof}

\begin{corollary}[Calibrated non-Born readouts violate metric non-expansion]\label{cor:contrapositive}
Fix $d\geq 2$ and the rank-1 PVM $M=\{\ket{e_i}\bra{e_i}\}$, and let $P_M:\CP^{d-1}\to\Delta^{d-1}$ be a readout satisfying calibration (H3) but disagreeing with Born. Then there exist a pure state $[\psi]$ and a calibrated basis ray $[e_k]$ such that
\[
 d_{\mathrm{round}}(R_M([\psi]),R_M([e_k])) > d_{\mathrm{FS}}([\psi],[e_k]).
\]
In particular, if such a readout also satisfies regularity (H1), then along a minimizing geodesic from $[\psi]$ to $[e_k]$ the inequality $F_{\mathrm{cl}}>F_Q$ holds on a set of positive measure, so (H2) fails.
\end{corollary}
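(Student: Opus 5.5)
The first claim is the contrapositive of the vertex half of the proof of Theorem~\ref{thm:pvm-readout}, so I would rerun that argument with the inequalities reversed. Suppose, for contradiction, that for every pure state $[\psi]$ and every $k$,
\[
d_{\mathrm{round}}(R_M([\psi]),R_M([e_k]))\leq d_{\mathrm{FS}}([\psi],[e_k]).
\]
Calibration (H3) gives $R_M([e_k])=e_k$, and \eqref{eq:fs-vertex-distance} turns the right-hand side into $\arccos|\braket{e_k}{\psi}|$. Using \eqref{eq:cos-identity} and the monotonicity of cosine on $[0,\pi/2]$, we get $R_M([\psi])_k\geq|\braket{e_k}{\psi}|$ for all $k$. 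The squared coordinates of both vectors sum to $1$, and both vectors have nonnegative entries. The factorisation step of Lemma~\ref{lem:vertex-rigidity} then forces $R_M([\psi])_k=|\braket{e_k}{\psi}|$, hence $P_M=B_M$. This contradicts the assumption that $P_M$ disagrees with Born. So some pair $([\psi],[e_k])$ violates the bound, and we fix such a pair.

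For the second claim, assume (H1) as well. Choose a minimizing Fubini--Study geodesic $\gamma:[a,b]\to\CP^{d-1}$ from $[\psi]$ to $[e_k]$, parameterized affinely. By (H1), $r=R_M\circ\gamma$ is absolutely continuous, so \eqref{eq:length-metric-derivative} applies and $\operatorname{md}r=\operatorname{md}^{+}r$ almost everywhere. Then
\[
\int_a^b \operatorname{md}r\,\dd s \;\geq\; d_{\mathrm{round}}(r(a),r(b)) \;>\; d_{\mathrm{FS}}([\psi],[e_k]) \;=\; \int_a^b\norm{\gamma'(s)}_{\mathrm{FS}}\,\dd s .
\]
Suppose instead that $\operatorname{md}r(s)\leq\norm{\gamma'(s)}_{\mathrm{FS}}$ held almost everywhere. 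Integrating would then contradict the strict inequality just displayed. Hence the set where $\operatorname{md}^{+}r(s)>\norm{\gamma'(s)}_{\mathrm{FS}}$ has positive measure. Squaring and multiplying by $4$, and using \eqref{eq:upper-speed} and \eqref{eq:fq-fs}, this is exactly the set where $F_{\mathrm{cl}}(P_M\circ\gamma;s)>F_Q(\gamma;s)$.

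The remaining obstacle is that (H2) quantifies over smooth curves. A minimizing Fubini--Study geodesic is a smooth (indeed real-analytic) curve on its closed parameter interval, and its affine parameterization is smooth. So $\gamma$ is an admissible curve in (H2), and a positive-measure violation along it is a failure of (H2).

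One subtle point needs care. The strict inequality needs a nonnegative integrand whose integral dominates the endpoint distance, and this dominance uses absolute continuity. Without (H1) the argument fails, as the staircase example shows, which is why (H1) is assumed explicitly in the second claim.
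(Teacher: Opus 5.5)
Your proposal is correct and follows the paper's proof. For the first claim, the paper argues directly: at a point $[\psi]$ where the readout disagrees with Born, it picks a coordinate with $r_k<a_k$, which is the same vertex-dominance content as your contradiction argument. For the second claim, both integrate the almost-everywhere speed bound along a minimizing geodesic exactly as in the proof of Theorem~\ref{thm:pvm-readout}, and your explicit check that this geodesic is a smooth curve admissible in (H2) fills in a step the paper leaves implicit.
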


\begin{proof}
Choose $[\psi]$ where the readout differs from Born and set $r_i=R_M([\psi])_i$ and $a_i=|\braket{e_i}{\psi}|$.
The distinct vectors $r,a\in S^{d-1}_+$ have equal unit norm, so $r_k<a_k$ for some $k$.
Calibration gives $R_M([e_k])=e_k$, and hence
\[
 d_{\mathrm{round}}(R_M([\psi]),R_M([e_k]))
 =\arccos r_k
 >\arccos a_k
 =d_{\mathrm{FS}}([\psi],[e_k]).
\]
If (H1) and the almost-everywhere inequality (H2) held along a minimizing geodesic joining these endpoints, the integration argument \eqref{eq:speed-bound}$\to$\eqref{eq:global-lip} would give the opposite distance inequality.
Thus $F_{\mathrm{cl}}>F_Q$ on a positive-measure subset of that geodesic, which is precisely a failure of (H2).
\end{proof}

Stated contrapositively, then, Theorem~\ref{thm:pvm-readout} says that once the apparatus is calibrated, Born is the only readout that avoids a metric-non-expansion violation: any calibrated pure-state readout departing from Born must somewhere manufacture statistical distinguishability that the projective Fubini--Study geometry does not supply.

This theorem is the main claim of the paper.
It is not a Gleason-type theorem \cite{Gleason1957}: the input is a fixed PVM with calibrated eigenstates, square-root regularity, and a universal Fisher non-expansion condition on its readout, rather than a basis-independent measure on projectors.

\begin{remark}[Hypotheses actually used]\label{rem:minimal}
The displayed proof derives the global bound \eqref{eq:global-lip}, but the Born conclusion needs only its specialization to calibrated vertices.
Applying the same integration directly along minimizing Fubini--Study geodesics with an endpoint in $\{[e_1],\ldots,[e_d]\}$ shows that Theorem~\ref{thm:pvm-readout} remains valid with (H1) and (H2) restricted to that family of curves.
The integrated inequality the proof extracts from them is the vertexwise distance bound
\begin{equation}\label{eq:distance-form}
 d_{\mathrm{FR}}\bigl(P_M([\psi]),\delta_i\bigr)\leq 2\,d_{\mathrm{FS}}([\psi],[e_i]),
 \qquad i=1,\ldots,d,
\end{equation}
where $d_{\mathrm{FR}}$ is the completed Fisher--Rao distance on the closed simplex (the Fisher metric itself lives on the open simplex, and its metric completion is the round-orthant geometry of the square-root chart).
By \eqref{eq:chart-isometry} the Fisher metric is four times the round metric in the square-root chart, so $d_{\mathrm{FR}}$ is twice the round distance (the positive orthant is geodesically convex, so intrinsic and ambient round distances agree), and \eqref{eq:distance-form} is exactly \eqref{eq:vertex-bound}: Wootters' statistical angle $\arccos\sqrt{P_M([\psi])_i}=\tfrac12\,d_{\mathrm{FR}}\bigl(P_M([\psi]),\delta_i\bigr)$ is bounded by the quantum angle $\arccos|\braket{e_i}{\psi}|$ \cite{Wootters1981}.
The Born readout attains equality in \eqref{eq:distance-form} for every $[\psi]$ and every $i$.
Inequality \eqref{eq:distance-form} together with normalization already forces Born, by the steps \eqref{eq:vertex-bound}--\eqref{eq:born}; stated against the named vertices $\delta_i$, it absorbs the calibration anchor.
Operationally, \eqref{eq:distance-form} says the readout may not report $[\psi]$ as more distinguishable from any calibration preparation, in statistical angle, than the projective geometry already makes it.
The role of the curve-level primitives (H1)--(H3) is to derive \eqref{eq:distance-form} rather than postulate it.
\end{remark}

\section{Calibration as Apparatus Labeling}\label{sec:calibration}

The status of (H3) was set out in Section~\ref{sec:hypotheses}: it is the boundary anchor on the $d$ basis preparations, supplied by the equivalence-class definition of the labeled PVM apparatus, and weaker than the full Born formula.
Two further remarks are in order.

First, (H3) is not a Fisher consequence. The uniform readout from Section~\ref{sec:hypotheses} satisfies (H2) trivially yet violates (H3); no Fisher-only hypothesis on $P_M$ can supply the calibration condition.
That counterexample makes (H3) physical input rather than redundant decoration of (H2).

Second, the labeling content of (H3) is semantic, not dynamical. The permuted-Born readout from Section~\ref{sec:hypotheses} satisfies (H2) (it equals Born up to relabeling, Lemma~\ref{lem:phase-variance}) yet violates (H3) for the labeled PVM $M$.
(H3) is what fixes the labels of the apparatus to the projectors they are declared to measure.
Once the labels mean what they say, and once the readout satisfies square-root regularity (H1) and universal Fisher non-expansion (H2), the only admissible map on the rest of $\CP^{d-1}$ is Born by Theorem~\ref{thm:pvm-readout}.

\section{Arbitrary-Rank Projective Measurements}\label{sec:rank-general}

The rank-1 restriction can be removed. Fix $d\geq 2$ and a projective measurement
\[
M=\{\Pi_1,\ldots,\Pi_m\},\qquad
\Pi_i^\dagger=\Pi_i=\Pi_i^2,\quad
\Pi_i\Pi_j=0\ (i\neq j),\quad
\sum_{i=1}^m\Pi_i=I,
\]
with $m\geq 2$ outcomes and ranks $k_i=\operatorname{rank}\Pi_i\geq 1$, $\sum_i k_i=d$.
A readout is now a map $P_M:\CP^{d-1}\to\Delta^{m-1}$, with square-root readout $R_M=\sqrt{P_M}:\CP^{d-1}\to S^{m-1}_+$, and the definitions of Section~\ref{sec:setup} (the upper metric speed \eqref{eq:upper-speed}, $F_{\mathrm{cl}}$, and $F_Q$) apply verbatim with the output index set $\{1,\ldots,m\}$.
Hypotheses (H1) and (H2) are unchanged.
Calibration is imposed on each eigenspace:

\medskip
\noindent\textbf{(H3$'$) Eigenspace calibration.}
For every $i$ and every unit vector $\chi\in\operatorname{range}\Pi_i$,
\[
P_M([\chi])=\delta_i.
\]

\medskip
For $k_i\equiv 1$ this is (H3).
As in Section~\ref{sec:hypotheses}, (H3$'$) is read as apparatus labeling: an outcome labeled by the projector $\Pi_i$ is, definitionally, reported with certainty on every preparation in $\operatorname{range}\Pi_i$.
Note that calibrating only an orthonormal basis of each eigenspace would not suffice for the argument below: the extremal anchor $[\Pi_i\psi/\norm{\Pi_i\psi}]$ ranges over the whole subspace as $[\psi]$ varies.

The calibrated set is the disjoint union of the projective subspaces $\mathbb{P}(\operatorname{range}\Pi_i)\cong\CP^{k_i-1}$, and the Fubini--Study distance from a state to the $i$th subspace has the closed form
\begin{equation}\label{eq:subspace-distance}
 d_{\mathrm{FS}}\bigl([\psi],\mathbb{P}(\operatorname{range}\Pi_i)\bigr)=\arccos\norm{\Pi_i\psi},
\end{equation}
attained, when $\Pi_i\psi\neq 0$, at the normalized projection $[\Pi_i\psi/\norm{\Pi_i\psi}]$, the L\"uders state for outcome $i$ \cite{BuschLahtiMittelstaedt1996}.
Indeed, for unit $\chi\in\operatorname{range}\Pi_i$ one has $\braket{\chi}{\psi}=\braket{\chi}{\Pi_i\psi}$, and Cauchy--Schwarz gives $|\braket{\chi}{\psi}|\leq\norm{\Pi_i\psi}$, with equality exactly at the normalized projection.

\begin{lemma}[Born membership, arbitrary rank]\label{lem:born-membership-general}
Let $B_M([\psi])_i=\braket{\psi}{\Pi_i\psi}$ for a normalized representative $\psi$.
\begin{enumerate}
\item $B_M$ satisfies (H1), (H2), and (H3$'$).
\item Let $[\psi(s)]$ be a smooth pure-state curve with horizontal representative, $\braket{\psi}{\dot\psi}=0$, along which $a_i(s)=\norm{\Pi_i\psi(s)}>0$ for all $i$. Then each $a_i$ is smooth and
\begin{equation}\label{eq:block-deficit}
 F_Q=F_{\mathrm{cl}}(B_M\circ\gamma)+4\sum_{i=1}^m\bigl(\norm{\Pi_i\dot\psi}^2-\dot a_i^{\,2}\bigr),
\end{equation}
with every summand non-negative; $F_{\mathrm{cl}}=F_Q$ at a parameter $s$ if and only if $\Pi_i\dot\psi(s)\in\R\,\Pi_i\psi(s)$ for every $i$.
At rank one, in the horizontal gauge $\sum_i p_i\dot\varphi_i=0$, identity \eqref{eq:block-deficit} is the phase-variance identity \eqref{eq:phase-variance}.
\end{enumerate}
\end{lemma}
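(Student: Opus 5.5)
The plan mirrors the proof of Lemma~\ref{lem:phase-variance}, with the block amplitudes $a_i=\norm{\Pi_i\psi}$ in place of $|\braket{e_i}{\psi}|$.

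For part~1, (H3$'$) is immediate: if $\chi\in\operatorname{range}\Pi_i$ is a unit vector, then $\braket{\chi}{\Pi_j\chi}=\delta_{ij}$. For (H1), the square-root readout is $R_{M,i}([\psi])=\norm{\Pi_i\psi}$. Along any smooth curve with a smooth normalized representative, this is the norm of a smooth vector-valued function. It is therefore locally Lipschitz, by the reverse triangle inequality, and hence absolutely continuous on compact intervals. Continuity on $\CP^{d-1}$ follows in the same way, since $a_i$ is gauge-invariant. For (H2), I would work on a parameter interval with a horizontal representative, $\braket{\psi}{\dot\psi}=0$, so that $F_Q=4\norm{\dot\psi}^2$. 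Because $\sum_i\Pi_i=I$ and the projectors are mutually orthogonal, $\norm{\dot\psi}^2=\sum_i\norm{\Pi_i\dot\psi}^2$. At almost every $s$, the derivative $\dot a_i$ exists, and the Lipschitz bound gives $|\dot a_i|\leq\norm{\Pi_i\dot\psi}$. This holds even where $a_i=0$, because $\big|\norm{\Pi_i\psi(s+h)}-\norm{\Pi_i\psi(s)}\big|\leq\norm{\Pi_i(\psi(s+h)-\psi(s))}$. Since $r=(a_i)_i$ is absolutely continuous, the upper metric speed \eqref{eq:upper-speed} coincides almost everywhere with the round norm of $\dot r$, which is at most $\bigl(\sum_i\dot a_i^2\bigr)^{1/2}$. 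Hence $F_{\mathrm{cl}}\leq 4\sum_i\dot a_i^2\leq F_Q$ almost everywhere.

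For part~2, $a_i^2=\braket{\psi}{\Pi_i\psi}$ is smooth and positive, so $a_i$ is smooth. On the sphere the round speed of $r$ is exactly $\bigl(\sum_i\dot a_i^2\bigr)^{1/2}$, giving $F_{\mathrm{cl}}=4\sum_i\dot a_i^2$. Subtracting this from $F_Q=4\sum_i\norm{\Pi_i\dot\psi}^2$ yields \eqref{eq:block-deficit}.

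Non-negativity of each summand and the equality case come from computing $\dot a_i=\operatorname{Re}\braket{\Pi_i\psi}{\Pi_i\dot\psi}/a_i$. This is the Cauchy--Schwarz pairing of $\Pi_i\dot\psi$ with the unit vector $u_i=\Pi_i\psi/a_i$, so
\[
\dot a_i^2\le|\braket{u_i}{\Pi_i\dot\psi}|^2\le\norm{\Pi_i\dot\psi}^2 .
\]
The first inequality is an equality exactly when $\braket{u_i}{\Pi_i\dot\psi}$ is real, and the second exactly when $\Pi_i\dot\psi\in\C u_i$. Together these say $\Pi_i\dot\psi\in\R\,\Pi_i\psi$.

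The rank-one reduction follows by writing $\Pi_i\psi=\sqrt{p_i}e^{i\varphi_i}\ket{e_i}$. Then $\norm{\Pi_i\dot\psi}^2-\dot a_i^2=p_i\dot\varphi_i^2$. Horizontality gives $\sum_ip_i\dot\varphi_i=0$, so $\sum_ip_i\dot\varphi_i^2=\Var_p(\dot\varphi)$, which recovers \eqref{eq:phase-variance}.

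The main obstacle is the equality characterization. One subtlety is that it depends on the horizontal gauge: in an arbitrary gauge $\dot\psi$ carries an extra $i c\,\psi$ term. The statement fixes horizontality, so I will simply work in that gauge. The other care point is boundary behavior where some $a_i$ vanish in part~1, which the almost-everywhere Lipschitz argument above handles.
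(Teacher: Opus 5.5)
Your proposal is correct and follows essentially the same route as the paper: reverse-triangle Lipschitz bounds give (H1), a horizontal lift with $F_Q=4\sum_i\norm{\Pi_i\dot\psi}^2$ gives (H2), and subtraction plus Cauchy--Schwarz against $u_i=\Pi_i\psi/a_i$ gives the deficit identity and its equality case, with the $p_i\dot\varphi_i^2$ computation at rank one. The only minor variation is in (H2), where you bound $|\dot a_i|\leq\norm{\Pi_i\dot\psi}$ directly from the difference quotient at every differentiability point, including zeros of $a_i$, whereas the paper treats $a_i>0$ (explicit derivative formula) and $a_i=0$ (where $\dot a_i=0$) separately; both arguments are valid.
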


\begin{proof}
(H3$'$) is immediate: $\Pi_j\chi=\delta_{ij}\chi$ for unit $\chi\in\operatorname{range}\Pi_i$.
For (H1), the reverse triangle inequality gives
$\bigl|\norm{\Pi_i\psi(s)}-\norm{\Pi_i\psi(t)}\bigr|\leq\norm{\Pi_i(\psi(s)-\psi(t))}\leq\norm{\psi(s)-\psi(t)}$
for any representative, so each $a_i$ is locally Lipschitz along smooth curves, hence absolutely continuous on compact parameter intervals; continuity of $B_M$ on $\CP^{d-1}$ is clear.
For (H2), choose a horizontal representative on any parameter interval, so that
$F_Q=4\norm{\dot\psi}^2=4\sum_i\norm{\Pi_i\dot\psi}^2$
by completeness and mutual orthogonality of the $\Pi_i$.
The output amplitude curve $(a_i)_i$ is absolutely continuous, so \eqref{eq:upper-speed} agrees almost everywhere with its ordinary round velocity.
At almost every parameter: where $a_i(s)>0$ and $a_i$ is differentiable,
\[
\dot a_i=\frac{\mathrm{Re}\,\braket{\Pi_i\psi}{\Pi_i\dot\psi}}{a_i},
\qquad
|\dot a_i|\leq\norm{\Pi_i\dot\psi}
\]
by Cauchy--Schwarz, while at almost every point of the set $\{a_i=0\}$ where $a_i$ is differentiable, $\dot a_i=0$.
Hence $F_{\mathrm{cl}}=4\sum_i\dot a_i^{\,2}\leq 4\sum_i\norm{\Pi_i\dot\psi}^2=F_Q$ almost everywhere.

For \eqref{eq:block-deficit}, positivity of every $a_i$ along the curve makes $a_i=\braket{\psi}{\Pi_i\psi}^{1/2}$ smooth, and subtracting the two displays above gives
$F_Q-F_{\mathrm{cl}}=4\sum_i\bigl(\norm{\Pi_i\dot\psi}^2-\dot a_i^{\,2}\bigr)$.
Writing $u_i=\Pi_i\psi/a_i$, each summand equals $\norm{\Pi_i\dot\psi}^2-\bigl(\mathrm{Re}\braket{u_i}{\Pi_i\dot\psi}\bigr)^2\geq 0$, and it vanishes precisely when $\Pi_i\dot\psi$ is a real multiple of $u_i$, i.e.\ $\Pi_i\dot\psi\in\R\,\Pi_i\psi$.
At rank one, $\Pi_i\dot\psi=\dot z_i\ket{e_i}$ with $z_i=\sqrt{p_i}\,e^{i\varphi_i}$, and
$\norm{\Pi_i\dot\psi}^2-\dot a_i^{\,2}=|\dot z_i|^2-(\dd|z_i|/\dd s)^2=p_i\dot\varphi_i^2$;
summing gives $4\sum_i p_i\dot\varphi_i^2=4\Var_p(\dot\varphi)$ in the horizontal gauge $\sum_i p_i\dot\varphi_i=0$.
\end{proof}

\begin{theorem}[Projective readout rigidity, arbitrary rank]\label{thm:rank-general}
Fix $d\geq 2$ and the projective measurement $M=\{\Pi_1,\ldots,\Pi_m\}$, $m\geq 2$.
Let $P_M:\CP^{d-1}\to\Delta^{m-1}$ satisfy (H1), (H2), and (H3$'$).
Then
\[
P_M([\psi])_i=\braket{\psi}{\Pi_i\psi}
\]
for every pure state $[\psi]\in\CP^{d-1}$ and every $i$.
\end{theorem}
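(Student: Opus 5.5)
The plan is to run the two-move argument of Theorem~\ref{thm:pvm-readout} again, with the output sphere now $S^{m-1}_+$ and with the calibrated vertex $[e_i]$ replaced by the Lüders anchor $[\Pi_i\psi/\norm{\Pi_i\psi}]$ in the $i$th eigenspace.

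The first move is the global Lipschitz bound, and it carries over verbatim. Take any two pure states $[\psi],[\phi]$ and a minimizing Fubini--Study geodesic $\gamma$ joining them. By (H1), $r=R_M\circ\gamma$ is absolutely continuous, so its metric derivative exists almost everywhere and agrees with the upper metric speed. Hypothesis (H2), together with \eqref{eq:fq-fs}, then gives $\operatorname{md}r\leq\norm{\gamma'}_{\mathrm{FS}}$ almost everywhere. Integrating with \eqref{eq:length-metric-derivative} yields
\[
d_{\mathrm{round}}\bigl(R_M([\psi]),R_M([\phi])\bigr)\leq d_{\mathrm{FS}}([\psi],[\phi])
\]
on $S^{m-1}_+$. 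Nothing in this step used the rank or the number of outcomes.

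The second move is the anchoring. Fix $[\psi]$ and an index $i$.
\begin{itemize}
\item If $\Pi_i\psi\neq0$, set $\chi=\Pi_i\psi/\norm{\Pi_i\psi}$. By (H3$'$), $R_M([\chi])=e_i$, the $i$th vertex of $S^{m-1}_+$. By \eqref{eq:subspace-distance}, $d_{\mathrm{FS}}([\psi],[\chi])=\arccos\norm{\Pi_i\psi}$. Combining this with the Lipschitz bound and the identity $\cos d_{\mathrm{round}}(y,e_i)=y_i$ from \eqref{eq:cos-identity}, and using that cosine is decreasing on $[0,\pi/2]$, gives $R_M([\psi])_i\geq\norm{\Pi_i\psi}$.
\item If $\Pi_i\psi=0$, the inequality $R_M([\psi])_i\geq0=\norm{\Pi_i\psi}$ holds trivially.
\end{itemize}
So $R_M([\psi])$ dominates the vector $a=(\norm{\Pi_i\psi})_i$ componentwise. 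This $a$ lies in $S^{m-1}_+$, because $\sum_i\norm{\Pi_i\psi}^2=\braket{\psi}{\psi}=1$ by completeness and orthogonality of the $\Pi_i$.

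The closing step is the factorization argument of Lemma~\ref{lem:vertex-rigidity}. Write $r=R_M([\psi])$; since $r$ and $a$ both have unit norm, $0=\sum_i(r_i-a_i)(r_i+a_i)$. Every term is non-negative, so every term is zero, and equality follows coordinatewise. Squaring gives $P_M([\psi])_i=\norm{\Pi_i\psi}^2=\braket{\psi}{\Pi_i\psi}$.

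I expect no serious obstacle. The only point needing care is where calibration is used: the anchor $[\chi]$ depends on $\psi$ and can be any ray in $\mathbb{P}(\operatorname{range}\Pi_i)$, so (H3$'$) must hold on the whole eigenspace, as the paper remarks, and not just on a basis of it. One should also check that the geodesic step needs only the existence of some minimizing geodesic to $[\chi]$, which compactness of $\CP^{d-1}$ supplies.
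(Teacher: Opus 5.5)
Your proposal is correct and follows the paper's proof essentially step for step: the global $1$-Lipschitz bound carries over verbatim from the rank-one integration argument, the L\"uders state $\Pi_i\psi/\norm{\Pi_i\psi}$ serves as the calibrated anchor via \eqref{eq:subspace-distance} and (H3$'$), and componentwise dominance together with equal unit norms forces equality. Your remark that (H3$'$) must hold on the whole eigenspace, because the anchor ranges over it, also matches the paper's own comment.
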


\begin{proof}
The integration argument \eqref{eq:speed-bound}$\to$\eqref{eq:global-lip} in the proof of Theorem~\ref{thm:pvm-readout} uses only (H1), (H2), and the two length structures, so it holds verbatim with output sphere $S^{m-1}_+$:
\[
d_{\mathrm{round}}(R_M([\psi]),R_M([\phi]))\leq d_{\mathrm{FS}}([\psi],[\phi])
\qquad\text{for all pure states }[\psi],[\phi].
\]
Fix $[\psi]$ and $i$ with $\Pi_i\psi\neq 0$, and set $\chi_i=\Pi_i\psi/\norm{\Pi_i\psi}$.
Then $\braket{\chi_i}{\psi}=\norm{\Pi_i\psi}$, so $d_{\mathrm{FS}}([\psi],[\chi_i])=\arccos\norm{\Pi_i\psi}$ by \eqref{eq:subspace-distance}, and (H3$'$) gives $R_M([\chi_i])=e_i$.
The global bound then yields
\[
 d_{\mathrm{round}}(R_M([\psi]),e_i)\leq\arccos\norm{\Pi_i\psi},
\]
and the cosine identity \eqref{eq:cos-identity} gives
\[
 R_M([\psi])_i\geq\norm{\Pi_i\psi}.
\]
If $\Pi_i\psi=0$ the inequality $R_M([\psi])_i\geq 0=\norm{\Pi_i\psi}$ is automatic.
The vectors $R_M([\psi])$ and $\bigl(\norm{\Pi_i\psi}\bigr)_{i=1}^m$ both lie in $S^{m-1}_+$: the first because $P_M([\psi])\in\Delta^{m-1}$, the second because $\sum_i\norm{\Pi_i\psi}^2=\norm{\psi}^2=1$ by completeness and mutual orthogonality.
Componentwise dominance together with equal sums of squares forces equality componentwise, exactly as in \eqref{eq:cos-dominance}--\eqref{eq:amp-equal}: $R_M([\psi])_i=\norm{\Pi_i\psi}$ for every $i$.
Squaring gives the claim.
\end{proof}

\begin{remark}[Anchors used at arbitrary rank]\label{rem:anchors-general}
As in Remark~\ref{rem:minimal}, the proof uses (H1) and (H2) only along minimizing geodesics with one endpoint in the calibrated set $\bigsqcup_i\mathbb{P}(\operatorname{range}\Pi_i)$, and the integrated content is the eigenspace distance bound
\[
 d_{\mathrm{FR}}\bigl(P_M([\psi]),\delta_i\bigr)\leq 2\,d_{\mathrm{FS}}\bigl([\psi],\mathbb{P}(\operatorname{range}\Pi_i)\bigr),
 \qquad i=1,\ldots,m,
\]
which the Born readout saturates for every $[\psi]$ and every $i$: by \eqref{eq:subspace-distance}, both sides equal $2\arccos\norm{\Pi_i\psi}$ for $B_M$.
Wootters' angle identity thus persists at arbitrary rank, with the L\"uders state as the extremal anchor.
\end{remark}

\begin{corollary}[Born assignment across all contexts]\label{cor:all-contexts}
Suppose that to every projective measurement $M$ on $\C^d$ a readout $P_M:\CP^{d-1}\to\Delta^{m(M)-1}$ is assigned, and that each $P_M$ separately satisfies (H1), (H2), and (H3$'$) relative to its own $M$.
Then $P_M([\psi])_i=\braket{\psi}{\Pi_i\psi}$ for every $M$, every outcome $i$, and every pure state $[\psi]$.
In particular the assignment is noncontextual (a projector occurring in two different measurements receives the same probability in both) as a consequence of the theorem rather than a premise, and the conclusion holds for every $d\geq 2$.
\end{corollary}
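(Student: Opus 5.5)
The corollary follows by applying Theorem~\ref{thm:rank-general} separately to each projective measurement; no argument across contexts is needed. Fix an arbitrary projective measurement $M=\{\Pi_1,\ldots,\Pi_m\}$ on $\C^d$ with $m=m(M)\geq 2$. By assumption its readout $P_M:\CP^{d-1}\to\Delta^{m-1}$ satisfies (H1), (H2), and (H3$'$) relative to $M$. These are exactly the hypotheses of Theorem~\ref{thm:rank-general}, so $P_M([\psi])_i=\braket{\psi}{\Pi_i\psi}$ for every $[\psi]$ and every $i$. Since $M$ was arbitrary, this proves the first claim. The only care needed is bookkeeping. The trivial measurement $\{I\}$ with $m=1$ is excluded from Theorem~\ref{thm:rank-general}, but there $\Delta^{0}$ is a single point, so $P_M\equiv 1=\braket{\psi}{I\psi}$ trivially. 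The rank-one case is also covered by Theorem~\ref{thm:pvm-readout}.

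Noncontextuality then becomes a one-line consequence. Suppose a projector $\Pi$ occurs as outcome $i$ of $M$ and as outcome $j$ of $M'$. Both readouts assign it the value $\braket{\psi}{\Pi\psi}$, and this value depends only on $\Pi$ and $[\psi]$, not on the other elements of $M$ or $M'$. So $P_M([\psi])_i=P_{M'}([\psi])_j$. I will stress that nothing in the hypotheses links $P_M$ to $P_{M'}$. Each hypothesis refers only to its own $M$, so the agreement is an output of the theorem, not a premise.

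For the claim ``every $d\geq 2$,'' I will note that Theorem~\ref{thm:rank-general} places no lower bound on $d$ beyond $d\geq2$. In particular it holds at $d=2$, where Gleason's theorem fails. The reason is that the proof uses only the global Lipschitz bound and the eigenspace anchors, with no frame-function argument. There is no real obstacle here. The one point to state carefully is that ``assignment'' means a separate map for each $M$, so the corollary quantifies over the context labels, and the Born formula removes any dependence on those labels.
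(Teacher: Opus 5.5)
Your proposal is correct and matches the paper's proof: you apply Theorem~\ref{thm:rank-general} to each $M$ separately and observe that the Born value $\braket{\psi}{\Pi\psi}$ depends only on $\Pi$. Your $m=1$ bookkeeping is harmless but unnecessary, since Section~\ref{sec:rank-general} already requires a projective measurement to have $m\geq 2$ outcomes.
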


\begin{proof}
Theorem~\ref{thm:rank-general} applies to each $M$ separately, and the Born value $\braket{\psi}{\Pi\psi}$ depends only on the projector $\Pi$.
\end{proof}

\begin{remark}[Contrast with Gleason]\label{rem:gleason-contrast}
Gleason's theorem \cite{Gleason1957} derives the Born measure from a single cross-context axiom, additivity of a frame function over every orthogonal resolution of the identity, and holds for $d\geq 3$ while failing at $d=2$, where frame functions are unconstrained beyond normalization on antipodal pairs.
Corollary~\ref{cor:all-contexts} assumes more per context and nothing across contexts: (H1), (H2), and (H3$'$) are metric-admissibility and calibration conditions on each apparatus separately, and no relation between the readouts of different measurements is postulated.
The two input sets are logically incomparable, and neither result subsumes the other.
The comparison isolates where the work happens: in Gleason's route the difficulty is the cross-context gluing, which is unavailable at $d=2$; here it is carried by the per-context non-expansion hypothesis, which is dimension-insensitive.
Cross-context agreement is not derived from less: it is inherited from per-context uniqueness.
A cross-measurement axiom does enter this paper, deliberately and with its content made explicit, only in the non-projective extension of Section~\ref{sec:povm-closure}.
\end{remark}

\section{POVM Boundary and Conditional Extensions}\label{sec:povm-closure}

Theorems~\ref{thm:pvm-readout} and~\ref{thm:rank-general} end at projective measurements, and the boundary is not an artifact of the proof technique.
This section proves that the metric hypotheses, even together with the one-sided certainty-of-occurrence calibration (H3$''$), are rigid exactly on projective measurements (Theorem~\ref{thm:povm-boundary}), and then records two conditional routes to finite POVMs in order to expose the additional identification principle each route spends.
The preliminary non-rigidity proposition uses only (H1)--(H2), while the boundary theorem also imposes (H3$''$).
Neither conditional extension follows from (H1)--(H3$''$), and neither is used anywhere in the projective results.

\subsection{Why Fisher non-expansion does not rigidify a fixed POVM}

A readout for a finite POVM $E=(E_a)_{a=1}^m$, $E_a\geq0$, $\sum_aE_a=I$, is a map $P_E:\CP^{d-1}\to\Delta^{m-1}$, with (H1) and (H2) read verbatim on the output index set $\{1,\ldots,m\}$.
Write $B_E([\psi])_a=\braket{\psi}{E_a\psi}$ for the Born readout of $E$ on normalized representatives.
Born membership persists at this generality:

\begin{lemma}[Born membership, arbitrary POVM]\label{lem:born-membership-povm}
For every finite POVM $E$ on $\C^d$, the Born readout $B_E$ satisfies (H1) and (H2).
If moreover $E_a>0$ for every $a$, its square-root readout is smooth with components bounded below by $\min_a\lambda_{\min}(E_a)^{1/2}>0$.
\end{lemma}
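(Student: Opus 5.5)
The plan is to follow the template of Lemma~\ref{lem:born-membership-general}, replacing orthogonal projectors by square roots of effects. Set $A_a=E_a^{1/2}$, so that $\sum_a A_a^2=I$. For a normalized representative $\psi$ the square-root readout is $b_a([\psi])=\norm{A_a\psi}$. The whole argument rests on the isometry $V:\psi\mapsto\bigoplus_a A_a\psi$ from $\C^d$ into $\C^d\otimes\C^m$, which satisfies $\norm{V\psi}^2=\sum_a\braket{\psi}{E_a\psi}=\norm{\psi}^2$.

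For (H1), the reverse triangle inequality gives
\[
\bigl|\norm{A_a\psi(s)}-\norm{A_a\psi(t)}\bigr|\leq\norm{A_a}\,\norm{\psi(s)-\psi(t)}\leq\norm{\psi(s)-\psi(t)},
\]
since $\norm{A_a}\leq1$. Along an affinely parameterized geodesic with a smooth horizontal lift, each $b_a$ is therefore locally Lipschitz, hence absolutely continuous. Continuity on $\CP^{d-1}$ is clear because $b_a$ is phase invariant.

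For (H2), I would fix a smooth curve, choose a horizontal representative on each compact parameter interval, and write $F_Q=4\norm{\dot\psi}^2=4\norm{V\dot\psi}^2=4\sum_a\norm{A_a\dot\psi}^2$. The output curve $(b_a)_a$ is absolutely continuous, so its upper metric speed agrees almost everywhere with its ordinary round velocity. At points where $b_a>0$ and $b_a$ is differentiable,
\[
\dot b_a=\frac{\mathrm{Re}\braket{A_a\psi}{A_a\dot\psi}}{b_a},
\]
and Cauchy--Schwarz gives $|\dot b_a|\leq\norm{A_a\dot\psi}$. At almost every differentiability point of $\{b_a=0\}$ we have $\dot b_a=0$, because $b_a\geq0$ attains its minimum there. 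Summing over $a$ gives $F_{\mathrm{cl}}=4\sum_a\dot b_a^2\leq F_Q$ almost everywhere. The one subtle point is that the Euclidean velocity of $(b_a)$ equals the round velocity, since the curve lies on $S^{m-1}_+$. This is the same step already used in Lemma~\ref{lem:born-membership-general}, so I expect no real obstacle here.

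For the positivity claim, if each $E_a>0$ then $b_a([\psi])^2=\braket{\psi}{E_a\psi}\geq\lambda_{\min}(E_a)$ for every unit $\psi$. This gives the lower bound $\min_a\lambda_{\min}(E_a)^{1/2}$ on every component. Smoothness then follows because $\psi\mapsto\braket{\psi}{E_a\psi}$ is a smooth function on $\CP^{d-1}$: it is the expectation value of a Hermitian operator and descends from the sphere. Its square root is smooth wherever the value is positive, which by the lower bound is everywhere.
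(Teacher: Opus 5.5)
Your proof is correct and follows the paper's argument essentially step for step: the reverse triangle inequality with $\norm{E_a^{1/2}}\leq 1$ for (H1); a horizontal lift, the Cauchy--Schwarz bound $|\dot b_a|\leq\norm{E_a^{1/2}\dot\psi}$ (the paper's semi-inner-product form $\dot p_a^{\,2}\leq 4p_a\braket{\dot\psi}{E_a\dot\psi}$), and summation via $\sum_aE_a=I$ for (H2); and $p_a\geq\lambda_{\min}(E_a)$ for the positivity claim. The isometry $V$ is only a repackaging of $\sum_aE_a=I$ and does no extra work here.
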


\begin{proof}
For a normalized representative, $\sqrt{p_a(s)}=\norm{E_a^{1/2}\psi(s)}$ with $p_a(s)=\braket{\psi(s)}{E_a\psi(s)}$, so the reverse triangle inequality gives
\[
\bigl|\sqrt{p_a(s)}-\sqrt{p_a(t)}\bigr|
\leq\norm{E_a^{1/2}\bigl(\psi(s)-\psi(t)\bigr)}
\leq\norm{\psi(s)-\psi(t)},
\]
using $\norm{E_a^{1/2}}\leq1$; each $\sqrt{p_a}$ is therefore locally Lipschitz along smooth curves, hence absolutely continuous on compact parameter intervals, and (H1) holds; continuity of $B_E$ on $\CP^{d-1}$ is clear.
For (H2), choose a horizontal representative, $\braket{\psi}{\dot\psi}=0$, so $F_Q=4\norm{\dot\psi}^2$.
The output amplitude curve $(\sqrt{p_a})_a$ is absolutely continuous, so \eqref{eq:upper-speed} agrees almost everywhere with its Euclidean velocity.
Where $p_a>0$, $\dot p_a=2\,\mathrm{Re}\braket{\dot\psi}{E_a\psi}$, and the Cauchy--Schwarz inequality for the semi-inner product $\langle\cdot,E_a\,\cdot\rangle$ gives $\dot p_a^{\,2}\leq 4\braket{\dot\psi}{E_a\dot\psi}\,p_a$, hence $(\dd\sqrt{p_a}/\dd s)^2\leq\braket{\dot\psi}{E_a\dot\psi}$; at almost every point of the set $\{p_a=0\}$ where $\sqrt{p_a}$ is differentiable, its derivative vanishes.
Summing over $a$ and using $\sum_aE_a=I$,
\[
F_{\mathrm{cl}}(B_E\circ\gamma)=4\sum_a\Bigl(\frac{\dd\sqrt{p_a}}{\dd s}\Bigr)^{\!2}
\leq 4\sum_a\braket{\dot\psi}{E_a\dot\psi}
=4\norm{\dot\psi}^2=F_Q
\qquad\text{almost everywhere.}
\]
At rank one this recovers part of Lemma~\ref{lem:phase-variance}, and for projectors the (H2) computation of Lemma~\ref{lem:born-membership-general}, since then $\braket{\dot\psi}{\Pi_a\dot\psi}=\norm{\Pi_a\dot\psi}^2$.
The positivity statement is immediate from $p_a\geq\lambda_{\min}(E_a)$.
\end{proof}

\begin{proposition}[Non-rigidity under the metric hypotheses alone]\label{prop:povm-nonrigidity}
Let $E=(E_a)_{a=1}^m$ be a finite POVM on $\C^d$ with $E\neq(I/m,\ldots,I/m)$.
For $0<\lambda<1$ define the barycentric contraction $T_\lambda(u)_a=(1-\lambda)u_a+\lambda/m$ and the readout
\[
P^{(\lambda)}_E=T_\lambda\circ B_E.
\]
Then $P^{(\lambda)}_E$ has a smooth square-root readout with components bounded below by $\sqrt{\lambda/m}$, satisfies (H2), with $F_{\mathrm{cl}}(P^{(\lambda)}_E\circ\gamma)\leq(1-\lambda)\,F_Q(\gamma)$ almost everywhere on every smooth pure-state curve, and differs from the Born readout $B_E$.
\end{proposition}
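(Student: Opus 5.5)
We prove the three claims separately: smoothness with the lower bound, the Fisher bound, and non-Born-ness. All three follow from the Fisher contraction of the barycentric map $T_\lambda$ composed with Lemma~\ref{lem:born-membership-povm}.

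\emph{Smoothness and lower bound.} Write $p=B_E([\psi])$ and $q=T_\lambda(p)$. Each $p_a=\braket{\psi}{E_a\psi}$ is a smooth function on $\CP^{d-1}$, being a quadratic form in normalized representatives that is phase invariant. Hence $q_a=(1-\lambda)p_a+\lambda/m\geq\lambda/m>0$. Since $t\mapsto\sqrt t$ is smooth on $[\lambda/m,1]$, the map $\sqrt{q_a}$ is smooth and bounded below by $\sqrt{\lambda/m}$. In particular (H1) holds, although the statement only asks for smoothness.

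\emph{Fisher bound.} Fix a smooth curve $\gamma$ and set $p(s)=B_E(\gamma(s))$. Because $q$ lies in the open simplex and is smooth, the extended speed \eqref{eq:upper-speed} equals the elementary formula \eqref{eq:fcl-sqrt}:
\[
F_{\mathrm{cl}}(q)=\sum_a\frac{(1-\lambda)^2\dot p_a^{\,2}}{(1-\lambda)p_a+\lambda/m}.
\]
On terms with $p_a>0$, use $(1-\lambda)p_a+\lambda/m\geq(1-\lambda)p_a$ to bound each term by $(1-\lambda)\,\dot p_a^{\,2}/p_a=4(1-\lambda)\bigl(\dd\sqrt{p_a}/\dd s\bigr)^2$. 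On terms with $p_a=0$, note that $p_a\geq0$ is smooth, so $\dot p_a=0$ wherever $p_a$ vanishes (an interior minimum), and those terms are zero.

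The sum is therefore at most $(1-\lambda)$ times the square-root Fisher expression for $B_E$. Lemma~\ref{lem:born-membership-povm} bounds that expression almost everywhere by $F_Q$, via its pointwise estimate $(\dd\sqrt{p_a}/\dd s)^2\leq\braket{\dot\psi}{E_a\dot\psi}$ at differentiability points of $\sqrt{p_a}$. This gives $F_{\mathrm{cl}}(q)\leq(1-\lambda)F_Q$ almost everywhere, which is (H2).

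The main obstacle is the bookkeeping at points where $p_a=0$, where $\sqrt{p_a}$ may fail to be differentiable. This concerns only a null set of parameters. Alternatively, we avoid it by citing the Cauchy--Schwarz bound $\dot p_a^{\,2}\leq4\braket{\dot\psi}{E_a\dot\psi}p_a$ from the lemma's proof, which holds at every parameter. With that bound each term is at most $(1-\lambda)^2\,4\braket{\dot\psi}{E_a\dot\psi}\,p_a/((1-\lambda)p_a+\lambda/m)\leq4(1-\lambda)\braket{\dot\psi}{E_a\dot\psi}$, and summing gives $(1-\lambda)F_Q$ everywhere.

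\emph{Non-Born-ness.} $P^{(\lambda)}_E=B_E$ would force $T_\lambda(p)=p$, that is $p=(1/m,\ldots,1/m)$, for every pure state. Then $\braket{\psi}{E_a\psi}=1/m$ for all unit $\psi$, and polarization gives $E_a=I/m$ for every $a$, which is excluded by hypothesis. So some pure state $[\psi]$ has $B_E([\psi])$ different from the barycenter, and at that state $P^{(\lambda)}_E([\psi])\neq B_E([\psi])$ because $T_\lambda$ fixes only the barycenter.
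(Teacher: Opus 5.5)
Your proof is correct and matches the paper's argument: your fallback route, bounding each term by $\dot p_a^{\,2}\leq 4p_a\braket{\dot\psi}{E_a\dot\psi}$ together with $(1-\lambda)p_a\leq q_a$, is exactly the paper's proof (it avoids the zero set of $p_a$ and holds at every parameter), and the paper rules out $P^{(\lambda)}_E=B_E$ by the same barycenter argument. One detail to state explicitly is that identifying $4\sum_a\braket{\dot\psi}{E_a\dot\psi}=4\norm{\dot\psi}^2$ with $F_Q$ requires a horizontal representative, $\braket{\psi}{\dot\psi}=0$, which the paper fixes at the outset.
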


\begin{proof}
Along a smooth curve with horizontal representative, write $p_a=\braket{\psi}{E_a\psi}$ and $q_a=(1-\lambda)p_a+\lambda/m$.
Each $q_a$ is smooth with $q_a\geq\lambda/m$, so the square-root readout is smooth and the classical Fisher expression applies everywhere:
\[
\begin{aligned}
F_{\mathrm{cl}}\bigl(P^{(\lambda)}_E\circ\gamma\bigr)
&=\sum_a\frac{(1-\lambda)^2\dot p_a^{\,2}}{q_a}\\
&\leq 4(1-\lambda)\sum_a
  \frac{(1-\lambda)p_a}{q_a}\,
  \braket{\dot\psi}{E_a\dot\psi}\\
&\leq 4(1-\lambda)\sum_a\braket{\dot\psi}{E_a\dot\psi}
 =(1-\lambda)\,F_Q,
\end{aligned}
\]
using the Cauchy--Schwarz bound $\dot p_a^{\,2}\leq 4\,p_a\braket{\dot\psi}{E_a\dot\psi}$ from the proof of Lemma~\ref{lem:born-membership-povm}, valid for every effect including those with kernels, together with $(1-\lambda)p_a\leq q_a$.
Finally, $T_\lambda\circ B_E=B_E$ would force $B_E$ to be constantly the barycenter, i.e.\ $\braket{\psi}{E_a\psi}=1/m$ for every $[\psi]$ and $a$, i.e.\ $E_a=I/m$ for every $a$, which is excluded.
\end{proof}

For the excluded uniform POVM $E_a=I/m$, any constant readout with interior value $q\neq(1/m,\ldots,1/m)$ is smooth with $F_{\mathrm{cl}}=0$ and differs from Born.
So no finite POVM has an (H1)--(H2)-rigid readout.

Three remarks make the failure precise.
First, the rival is a Born readout in disguise:
\[
P^{(\lambda)}_E([\psi])_a
=\braket{\psi}{\bigl((1-\lambda)E_a+\tfrac{\lambda}{m}I\bigr)\psi}.
\]
With $\Lambda_\lambda(b|a)=(1-\lambda)\delta_{ba}+\lambda/m$, the effects in this display are those of the nontrivial stochastic smearing $\Lambda_{\lambda *}E$.
The fuzzy/post-processing order, post-processing cleanliness, and minimal sufficiency of POVMs study this kind of classical information loss and redundancy \cite{Kuramochi2015,HaapasaloPellonpaa2017}.
Those results presuppose the standard Born statistical experiment and are cited here only to locate the ambiguity; they are not inputs to the proof.
Projective calibration excludes the ambiguity because the simplex vertices reached on eigenstate preparations expose any admixed classical noise, while a generic unsharp POVM has no comparable boundary witness.
Post-processing cleanliness or minimal sufficiency of the operator tuple $E$ does not by itself restore the present readout rigidity: the hypotheses have not yet identified an arbitrary candidate map $P_E$ with the standard statistical experiment of $E$ outside the certainty data.
Second, informational completeness does not restore rigidity: an informationally complete POVM spans the space of Hermitian operators, hence is not the uniform POVM, and the proposition applies to it directly.
Third, $T_\lambda$ preserves the componentwise order of the output, so weakened anchor conditions of argmax type do not exclude the family.

The boundary is a characterization, built on the block structure of certainty subspaces.
The existence of an exact certainty anchor for outcome $a$ is the finite-dimensional attainment form of $\norm{E_a}=1$. The norm-$1$ property for general observables is formulated through outcome probabilities that can be made arbitrarily close to one; in finite dimension the operator norm is attained, so $\norm{E_a}=1$ yields a nonzero subspace $S_a=\ker(I-E_a)$. Hypothesis (H3$''$) then calibrates the candidate readout on every unit vector in that subspace \cite{HeinonenEtAl2003}.

\begin{lemma}[Certainty block decomposition]\label{lem:certainty-blocks}
Let $E=(E_a)_{a=1}^m$ be a finite POVM on $\C^d$ and $S_a=\ker(I-E_a)$ its certainty subspaces.
Then the $S_a$ are mutually orthogonal, and with $S=\bigoplus_aS_a$ and $K=S^{\perp}$ every effect block-diagonalizes as
\[
E_a=P_{S_a}\oplus A_a,
\]
where $P_{S_a}$ is the orthogonal projector onto $S_a$ and $(A_a)_a$ is a POVM on $K$ with $\norm{A_a}<1$ for every $a$.
Moreover $K=0$ if and only if $E$ is projective; in particular the certainty subspaces of a non-projective POVM do not span $\C^d$.
\end{lemma}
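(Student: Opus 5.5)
The plan is to derive everything from the two operator facts $0\le E_a\le I$ and $\sum_b E_b=I$, working one certainty vector at a time.

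\emph{Orthogonality.} Take a unit vector $\chi\in S_a$, so $E_a\chi=\chi$ and $\braket{\chi}{E_a\chi}=1$. Completeness gives $\sum_{b\neq a}\braket{\chi}{E_b\chi}=0$. Since each term is non-negative, $\braket{\chi}{E_b\chi}=\norm{E_b^{1/2}\chi}^2=0$, hence $E_b\chi=0$ for every $b\neq a$. For $\eta\in S_b$ with $b\neq a$ this yields $\braket{\eta}{\chi}=\braket{E_b\eta}{\chi}=\braket{\eta}{E_b\chi}=0$, so $S_a\perp S_b$.

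\emph{Block decomposition.} The previous step shows that every $E_b$ acts on $S_a$ as $\delta_{ab}$ times the identity. Therefore each $S_a$ is invariant under every $E_b$, and so is $S=\bigoplus_a S_a$. Because $E_b$ is self-adjoint, $K=S^\perp$ is invariant as well. On $S$ the operator $E_a$ coincides with $P_{S_a}$. Define $A_a=E_a|_K$; this is positive, and $\sum_a A_a=I_K$ because $\sum_a E_a=I$ and $K$ is invariant. This gives $E_a=P_{S_a}\oplus A_a$.

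\emph{Strict norm bound.} Suppose $\norm{A_a}=1$ for some $a$. In finite dimension, $A_a\le I_K$ together with norm $1$ gives a unit eigenvector $\kappa\in K$ with $A_a\kappa=\kappa$. Then $E_a\kappa=\kappa$, so $\kappa\in S_a\subseteq S$, which contradicts $\kappa\in K\setminus\{0\}$. Hence $\norm{A_a}<1$ for every $a$.

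\emph{Projectivity.} If $K=0$, then $E_a=P_{S_a}$ are mutually orthogonal projectors summing to $I$, so $E$ is projective. Conversely, if $E$ is projective, then each $E_a$ is the projector onto its range, and that range equals $\ker(I-E_a)=S_a$. Completeness gives $\bigoplus_a S_a=\C^d$, so $K=0$. Thus for a non-projective $E$ we have $K\neq0$, and the certainty subspaces do not span $\C^d$.

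Two points are routine but easy to state loosely. One is using positivity to pass from $\braket{\chi}{E_b\chi}=0$ to $E_b\chi=0$. The other is that attainment of the norm by an eigenvector relies on finite dimension together with $A_a\ge0$, which makes $\norm{A_a}$ the top eigenvalue. The argument does not need to handle $S_a=0$ separately: that case simply means $P_{S_a}=0$.
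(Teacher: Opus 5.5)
Your proposal is correct and follows the paper's proof essentially step for step. Positivity plus completeness forces $E_b\chi=0$ for $b\neq a$, which gives orthogonality and the invariance of $S$ and $K$. The restriction $A_a=E_a|_K$ cannot have eigenvalue $1$ without landing in $S_a\cap K=0$, and projectivity is equivalent to the certainty subspaces spanning $\C^d$. Your explicit notes on $\braket{\chi}{E_b\chi}=\norm{E_b^{1/2}\chi}^2$ and on norm attainment in finite dimension spell out what the paper leaves implicit.
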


\begin{proof}
For unit $\chi\in S_a$, $\sum_b\braket{\chi}{E_b\chi}=1$ and $\braket{\chi}{E_a\chi}=1$ force $\braket{\chi}{E_b\chi}=0$, hence $E_b\chi=0$, for every $b\neq a$.
For unit $\chi'\in S_b$ with $b\neq a$, $\braket{\chi'}{\chi}=\braket{E_b\chi'}{\chi}=\braket{\chi'}{E_b\chi}=0$, so the $S_a$ are mutually orthogonal.
Each $E_a$ maps $S$ into $S$, acting as the identity on $S_a$ and as zero on every other $S_b$; being self-adjoint, it therefore maps $K$ into $K$, and the restrictions $A_a=E_a|_K$ are positive with $\sum_aA_a=I_K$.
If $A_a\xi=\xi$ for a unit $\xi\in K$ then $\xi\in S_a\cap K=0$, so $1$ is not an eigenvalue of $A_a$ and $\norm{A_a}<1$ in finite dimension.
If $K=0$ then $E_a=P_{S_a}$, a projective measurement; conversely a projective $E$ has $S_a=\operatorname{range}E_a$, which span $\C^d$.
\end{proof}

\begin{theorem}[Rigidity boundary]\label{thm:povm-boundary}
Fix a finite POVM $E=(E_a)_{a=1}^m$ on $\C^d$, $m\geq2$, and impose on candidate readouts $P_E:\CP^{d-1}\to\Delta^{m-1}$ the hypotheses (H1), (H2), and

\smallskip
\noindent\textbf{(H3$''$) Certainty-of-occurrence calibration.} $P_E([\chi])=\delta_a$ for every $a$ and every unit vector $\chi\in S_a=\ker(I-E_a)$.

\smallskip
\noindent This condition is intentionally one-sided: it calibrates preparations on which one named outcome occurs with certainty. It does not separately impose zero-probability calibration on $\ker E_a$ or certainty calibration for coarse-grained events $\sum_{a\in G}E_a$; those stronger anchor sets are outside Theorem~\ref{thm:povm-boundary}.

\smallskip
\noindent These hypotheses force $P_E=B_E$ if and only if $E$ is projective.
\end{theorem}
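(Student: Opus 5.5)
The plan is to prove the two directions separately. For the ``if'' direction I reduce to Theorem~\ref{thm:rank-general}. For the ``only if'' direction I build an explicit rival readout from the block decomposition of Lemma~\ref{lem:certainty-blocks}, and Lemma~\ref{lem:born-membership-povm} supplies the metric hypotheses for it. In both directions I also need $B_E$ itself to be admissible. Lemma~\ref{lem:born-membership-povm} gives (H1) and (H2). For (H3$''$): if $\chi\in S_a$ is a unit vector, the first step of the proof of Lemma~\ref{lem:certainty-blocks} gives $E_b\chi=0$ for $b\neq a$, so $B_E([\chi])=\delta_a$.

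\emph{Projective $\Rightarrow$ rigid.} If $E$ is projective then $S_a=\operatorname{range}E_a$, so (H3$''$) coincides with (H3$'$). Theorem~\ref{thm:rank-general} then gives $P_E=B_E$, provided every $E_a$ is nonzero. A projective POVM may, however, contain zero effects, for which $S_a=0$ and (H3$''$) imposes nothing. I handle this by rerunning the proof of Theorem~\ref{thm:rank-general} over the nonzero outcomes only.
\begin{itemize}
\item The global Lipschitz bound and the L\"uders anchors give $R_E([\psi])_a\ge\norm{E_a\psi}$ for every $a$ with $E_a\ne0$.
\item These lower bounds have squares summing to $1$.
\item $R_E([\psi])$ is a unit vector with nonnegative entries.
\end{itemize}
Together these force equality on the nonzero outcomes and force $R_E([\psi])_a=0$ whenever $E_a=0$. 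That is again the Born readout.

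\emph{Non-projective $\Rightarrow$ not rigid.} By Lemma~\ref{lem:certainty-blocks}, $K\neq0$ and $E_a=P_{S_a}\oplus A_a$ with $(A_a)$ a POVM on $K$. I choose a different POVM $(A'_a)$ on $K$, namely $A'=(I_K,0,\ldots,0)$, unless $A$ already equals this. In that exceptional case I take $A'=(0,I_K,0,\ldots,0)$, which is possible because $m\ge2$. I then set
\[
E'_a=P_{S_a}\oplus A'_a,\qquad P_E:=B_{E'}.
\]
Since $E'$ is a genuine finite POVM on $\C^d$, Lemma~\ref{lem:born-membership-povm} yields (H1) and (H2) for $P_E$ directly.

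For (H3$''$), take a unit $\chi\in S_a$. Then $E'_a\chi=\chi$, so $\braket{\chi}{E'_a\chi}=1$, which forces $P_E([\chi])=\delta_a$. Note that the calibration refers only to the certainty subspaces $S_a$ of the original $E$. The new POVM $E'$ may have larger certainty subspaces, but (H3$''$) does not see this.

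For the difference from Born, pick $a$ with $A'_a\ne A_a$. Since both are self-adjoint, there is a unit $\xi\in K$ with $\braket{\xi}{A'_a\xi}\ne\braket{\xi}{A_a\xi}$. Because $\xi\perp S$, this gives $B_{E'}([\xi])_a\neq B_E([\xi])_a$.

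The main point to check is the converse construction. It is essential that the rival is a Born readout of another POVM, since that is what makes (H1) and (H2) come for free. The only real care needed is that the modification lives entirely on $K$, so that it cannot disturb the calibrated states in $S$. A less crude alternative is the barycentric smearing of Proposition~\ref{prop:povm-nonrigidity} restricted to $K$, with effects $(1-\lambda)A_a+\lambda I_K/m$. I avoid it because it degenerates exactly when $A_a=I_K/m$ for all $a$, and the swap to $I_K$ on a single outcome sidesteps that case.
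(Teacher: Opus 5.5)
Your proof is correct and follows the paper's argument. The projective direction reruns the L\"uders-anchor bound over the nonzero effects exactly as the paper does. The non-projective direction likewise keeps the $P_{S_a}$ blocks, alters only the residual POVM on $K$, and takes the Born readout of the altered tuple, so that Lemma~\ref{lem:born-membership-povm} supplies (H1)--(H2).

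The only difference is your sharp replacement $(I_K,0,\ldots,0)$. The paper uses the smearing $(1-\lambda)A_a+\lambda q_aI_K$ with $q$ interior and $q_aI_K\neq A_a$ for some $a$; this already avoids the barycenter degeneracy you mention and gives rivals arbitrarily close to $B_E$. Your exceptional case never occurs, since Lemma~\ref{lem:certainty-blocks} gives $\norm{A_1}<1$ whenever $K\neq0$.
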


\begin{proof}
If $E$ is projective, repeat the global Lipschitz and calibrated-anchor argument from Theorem~\ref{thm:rank-general} for every nonzero effect $E_a$.
It gives $R_E([\psi])_a\geq\norm{E_a\psi}$ for each such $a$.
The squared right-hand sides sum to one, while all squared components of $R_E([\psi])$ also sum to one, so equality holds for every nonzero effect and every component corresponding to a zero effect vanishes.
Thus $P_E=B_E$, including when zero-probability dummy outcomes are listed.
If $E$ is not projective, $K\neq0$ by Lemma~\ref{lem:certainty-blocks}.
Choose $q$ in the open simplex $\Delta^{m-1,\circ}$ with $q_aI_K\neq A_a$ for some $a$ (possible, since the condition $A_a=q_aI_K$ for all $a$ pins at most one point $q$) and set
\[
\widetilde E_a=P_{S_a}\oplus\bigl((1-\lambda)A_a+\lambda q_a I_K\bigr),
\qquad 0<\lambda<1.
\]
Then $\widetilde E$ is a POVM: each term is positive and $\sum_a\widetilde E_a=P_S\oplus I_K=I$.
Its Born readout $B_{\widetilde E}$ satisfies (H1) and (H2) by Lemma~\ref{lem:born-membership-povm}, satisfies (H3$''$) for $E$ because $\braket{\chi}{\widetilde E_b\chi}=\delta_{ab}$ for unit $\chi\in S_a$, and differs from $B_E$ because $\widetilde E_a-E_a=0_S\oplus\lambda(q_aI_K-A_a)$ is a nonzero self-adjoint operator for some $a$.
So the hypotheses do not force $B_E$.
\end{proof}

\begin{remark}[What the boundary means operationally]\label{rem:povm-semantics}
Theorem~\ref{thm:povm-boundary} quantifies over maps carrying the label $E$, but (H3$''$) ties that label to the operator tuple only through the certainty subspaces.
For a PVM those subspaces span the Hilbert space and identify every effect.
For a non-projective POVM, the residual effects $A_a$ on $K$ are invisible to the occurrence-certainty anchors required by (H3$''$), and the hypotheses cannot distinguish $E$ from the modified tuple $\widetilde E$ used in the proof.
Thus the obstruction is semantic as well as geometric: before the Born rule is available, one must say what operational data make an unsharp apparatus the POVM $E$ rather than another tuple with the same occurrence-certainty behavior.
Defining the label by its full Born statistics would build the desired conclusion into the apparatus identification; using the one-sided behavior in (H3$''$) alone leaves the unsharp sector underdetermined.
Any genuine POVM extension must supply an independent identification principle for that sector.
\end{remark}

Projective measurements are therefore exactly the finite POVMs whose readouts are rigid under the metric hypotheses together with the certainty-of-occurrence calibration stated in (H3$''$).
The next two subsections record explicit additional principles that recover the usual finite-POVM assignment; they do not sharpen the metric theorem.

\subsection{Classical post-processing naturality}

For a column-stochastic matrix $\Lambda$ ($\Lambda(b|a)\geq0$, $\sum_b\Lambda(b|a)=1$), define the post-processed POVM $(\Lambda_*E)_b=\sum_a\Lambda(b|a)\,E_a$.
Operationally, $\Lambda_*E$ can be implemented by measuring $E$ and passing the recorded outcome through the classical channel $\Lambda$.

\medskip
\noindent\textbf{(N) Classical post-processing naturality.}
Readouts are assigned to every finite POVM on $\C^d$ as a function of the POVM, and for every finite POVM $E$ and every column-stochastic $\Lambda$,
\[
P_{\Lambda_*E}([\psi])=\Lambda\,P_E([\psi])
\qquad\text{for every pure state }[\psi].
\]

\medskip
For the literal implementation ``measure $E$, then apply $\Lambda$,'' the transformation of observed frequencies is ordinary probability calculus.
The substantive content of (N) is the further identification of that implementation with the readout assigned solely from the effect tuple $\Lambda_*E$, including implementations in which the same unsharp effects arise by a different physical route.
This is an implementation-functionality, or effect-noncontextuality, assumption at the level of probability assignments.
It parallels the operational-equivalence language used for unsharp measurements in generalized contextuality \cite{Spekkens2005}, although no ontological model or noncontextual hidden-variable claim is made here.
It is not a consequence of (H1)--(H3$''$).

\begin{theorem}[Conditional finite-POVM closure under post-processing naturality]\label{thm:povm-closure}
Suppose a pure-state readout $P_E:\CP^{d-1}\to\Delta^{m(E)-1}$ is assigned to every finite POVM $E$ on $\C^d$, the readout of every rank-one PVM satisfies (H1)--(H3), and (N) holds.
Then
\[
P_E([\psi])_a=\braket{\psi}{E_a\psi}
\]
for every finite POVM $E$, every outcome $a$, and every pure state $[\psi]$.
\end{theorem}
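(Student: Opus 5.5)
The plan is to reduce every effect to a rank-one PVM in its own eigenbasis, using (N) twice: once to \emph{fine-grain} the binary POVM $(E_a, I-E_a)$ back to a PVM, and once to \emph{coarse-grain} $E$ down to that binary POVM. The only rigidity input is Theorem~\ref{thm:pvm-readout}. The key observation is that (N) makes the readout a function of the effect tuple alone. So once two post-processing routes land on the same tuple, their readouts must agree, and no Naimark dilation or enlargement of $\C^d$ is needed.

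\emph{Step 1 (binary effects via spectral post-processing).} Fix a finite POVM $E=(E_a)_{a=1}^m$ and an outcome $a$. Diagonalize $E_a=\sum_{k=1}^d\lambda_k\ket{f_k}\bra{f_k}$ in an orthonormal eigenbasis $\{f_k\}$, with $0\le\lambda_k\le1$; repeated eigenvalues are allowed. Let $M_f=\{\ket{f_k}\bra{f_k}\}_{k=1}^d$ be the rank-one PVM in this basis. Its readout satisfies (H1)--(H3) by hypothesis, so Theorem~\ref{thm:pvm-readout} gives $P_{M_f}([\psi])_k=|\braket{f_k}{\psi}|^2$. Define the two-outcome column-stochastic matrix
\[
\Lambda(1|k)=\lambda_k,\qquad \Lambda(2|k)=1-\lambda_k .
\]
Then $\Lambda_*M_f=(E_a,\,I-E_a)$, and (N) yields
\[
P_{(E_a,I-E_a)}([\psi])_1=\sum_k\lambda_k|\braket{f_k}{\psi}|^2=\braket{\psi}{E_a\psi}.
\]

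\emph{Step 2 (coarse-graining $E$).} Apply (N) to $E$ itself with the deterministic matrix that sends outcome $a$ to $1$ and every $b\neq a$ to $2$. This matrix is column-stochastic, and its post-processing is $(E_a,\sum_{b\ne a}E_b)=(E_a,I-E_a)$, exactly the same tuple as in Step 1. Because (N) assigns readouts as a function of the POVM, the two computations describe one readout. Comparing them gives $P_E([\psi])_a=P_{(E_a,I-E_a)}([\psi])_1=\braket{\psi}{E_a\psi}$. Since $a$ was arbitrary, this proves the claim.

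\emph{Edge cases.} If $m=1$, then $E=(I)$ and the conclusion is trivial because $\Delta^0$ is a point. Zero effects are covered by taking all $\lambda_k=0$ in Step 1. Equal effects appearing under different labels cause no problem, because Step 2 is applied separately for each label $a$.

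The main point requiring care, rather than a real obstacle, is the \emph{well-definedness} used in Step 2. The identification $P_{\Lambda_*M_f}=P_{\Lambda'_*E}$ holds only because (N) states that the readout depends on the effect tuple $\Lambda_*E$ alone, not on the implementation route. This is exactly the effect-noncontextuality content flagged before the theorem, and I will state explicitly that this is where it is spent. Everything else is the spectral theorem together with Theorem~\ref{thm:pvm-readout} applied in the eigenbasis of each individual effect. In particular, the cross-context use of (H1)--(H3) is one rank-one PVM per effect, and Corollary~\ref{cor:all-contexts} is not needed beyond that.
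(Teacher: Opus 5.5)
Your proposal is correct and is essentially the paper's own proof, with the two steps taken in the opposite order: the one-versus-rest merge reduces $P_E([\psi])_a$ to the binary margin $\{E_a,I-E_a\}$, and the spectral smearing $\Lambda(1|k)=\lambda_k$ of the eigenbasis PVM, combined with Theorem~\ref{thm:pvm-readout}, evaluates that margin as $\braket{\psi}{E_a\psi}$. The paper likewise names these two instances of (N) as the only ones used, and it locates the effect-functionality content in exactly the same place (Remark~\ref{rem:minimal-naturality}).
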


\begin{proof}
Fix $E$ and an outcome $a$.
The one-versus-rest merge $\Lambda^{(a)}$, sending $a\mapsto1$ and every other outcome to $2$, is deterministic and column-stochastic, with $\Lambda^{(a)}_*E=\{E_a,\,I-E_a\}$, so (N) gives
\[
P_E([\psi])_a=P_{\{E_a,I-E_a\}}([\psi])_1.
\]
Diagonalize $E_a=\sum_{j=1}^d\mu_j\ket{f_j}\bra{f_j}$ with $0\leq\mu_j\leq1$, and let $M=\{\ket{f_j}\bra{f_j}\}_{j=1}^d$ be an eigenbasis PVM\@.
The stochastic matrix $\Lambda(1|j)=\mu_j$, $\Lambda(2|j)=1-\mu_j$ satisfies $\Lambda_*M=\{E_a,I-E_a\}$, so (N) and Theorem~\ref{thm:pvm-readout} applied to $M$ give
\[
P_{\{E_a,I-E_a\}}([\psi])_1=\sum_j\mu_j\,|\braket{f_j}{\psi}|^2=\braket{\psi}{E_a\psi}.
\qedhere
\]
\end{proof}

\begin{remark}[Minimal instances of (N), and what the proof avoids]\label{rem:minimal-naturality}
Only two instances of (N) are used directly: deterministic one-versus-rest merges of an arbitrary POVM, and binary stochastic smearings of rank-one PVMs.
Full stochastic naturality is a concise sufficient axiom; these two bridge relations are the subset that carries this proof. Globally, deterministic naturality on all finite POVMs already implies the stochastic instances by Proposition~\ref{prop:naturality-additivity}; locally, the spectral-smearing identification remains the substantive bridge.
The POVM $E$ is never simulated by projective measurements: only its binary margins $\{E_a,I-E_a\}$ are factored through individual PVMs, and the margins determine the probability vector.
The proof is therefore untouched by the fact that not every POVM is projective-simulable on the original system \cite{Oszmaniec2017}.
Finally, (N) excludes the family of Proposition~\ref{prop:povm-nonrigidity}: mixing outputs toward the barycenter does not commute with outcome merges of unequal block sizes.
\end{remark}

\subsection{Exact relation to effect additivity and prior art}\label{subsec:effect-noncontextuality}

\begin{proposition}[Naturality versus effect additivity]\label{prop:naturality-additivity}
For an assignment $E\mapsto P_E$ on all finite POVMs, (N) is equivalent to the following statement: for every pure state $[\psi]$ there is a map
\[
v_\psi:\{A:0\leq A\leq I\}\longrightarrow[0,1]
\]
such that
\[
v_\psi(I)=1,\qquad
v_\psi(A+B)=v_\psi(A)+v_\psi(B)\quad\text{whenever }A+B\leq I,
\]
and
\[
P_E([\psi])_a=v_\psi(E_a)
\]
for every finite POVM $E$ and every outcome $a$.
\end{proposition}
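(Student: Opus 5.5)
We prove the two implications separately.

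\emph{Additivity implies (N).} Suppose the maps $v_\psi$ exist and $P_E([\psi])_a=v_\psi(E_a)$. Fix a finite POVM $E$ and a column-stochastic $\Lambda$. We must show
\[
v_\psi\Bigl(\sum_a\Lambda(b|a)E_a\Bigr)=\sum_a\Lambda(b|a)\,v_\psi(E_a).
\]
Finite additivity on the effect algebra gives additivity over finite sums, since partial sums of $\Lambda(b|a)E_a$ stay below $I$. It therefore suffices to show $v_\psi(tA)=t\,v_\psi(A)$ for $t\in[0,1]$ and $0\leq A\leq I$.
\begin{enumerate}
\item Additivity gives $v_\psi(A/n)=v_\psi(A)/n$, hence $v_\psi(qA)=q\,v_\psi(A)$ for rational $q\in[0,1]$.
\item Additivity with values in $[0,1]$ gives monotonicity: if $B\leq A$ then $v_\psi(A)=v_\psi(B)+v_\psi(A-B)\geq v_\psi(B)$.
\item For real $t$, squeeze $qA\leq tA\leq q'A$ between rationals $q\leq t\leq q'$ and let $q,q'\to t$.
\end{enumerate}
Homogeneity follows, and hence (N).

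\emph{(N) implies additivity.} Define
\[
v_\psi(A)=P_{\{A,\,I-A\}}([\psi])_1 .
\]
\begin{enumerate}
\item \emph{Consistency.} For any POVM $E$ and outcome $a$, the one-versus-rest merge $\Lambda^{(a)}$ (as in the proof of Theorem~\ref{thm:povm-closure}) gives $P_E([\psi])_a=v_\psi(E_a)$.
\item \emph{Normalization.} The trivial POVM $\{I,0\}$ is the one-outcome POVM $\{I\}$ pushed forward by the deterministic map into two outcomes. Since $\Delta^0$ is a point, $P_{\{I,0\}}=(1,0)$ and $v_\psi(I)=1$.
\item \emph{Additivity.} Take $A+B\leq I$ and form the three-outcome POVM $\{A,B,I-A-B\}$. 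Merging outcomes $1$ and $2$ gives $\{A+B,\,I-A-B\}$, and (N) yields $v_\psi(A+B)=P([\psi])_1+P([\psi])_2$. Merging each single outcome against the rest identifies these two entries as $v_\psi(A)$ and $v_\psi(B)$.
\end{enumerate}

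The main obstacle is a set of bookkeeping points rather than any deep step.
\begin{itemize}
\item One must allow zero effects and one-outcome POVMs as legitimate finite POVMs. If the convention is $m\geq2$, normalization should instead come from $P_{\{I,0\}}$ being the image of itself under the swap and merge maps, with one entry forced by a deterministic relabeling.
\item In the direction from additivity to (N), the homogeneity extension to real scalars needs monotonicity together with the bound $v_\psi\leq1$. No continuity assumption is required, and this is the step to write out carefully.
\end{itemize}
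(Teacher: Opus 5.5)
Your proposal is correct and follows the paper's proof essentially step for step. In the forward direction you define $v_\psi(A)=P_{\{A,I-A\}}([\psi])_1$, use one-versus-rest merges for consistency, the deterministic split $\{I\}\to\{I,0\}$ for normalization, and the merge of $\{A,B,I-A-B\}$ for additivity; in the converse you use monotonicity, rational homogeneity, and a squeeze to real scalars to recover (N). One small correction: the squeeze relies on nonnegativity of $v_\psi$, which is what gives monotonicity, not on the upper bound $v_\psi\leq 1$.
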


\begin{proof}
Assume (N) and define $v_\psi(A)=P_{\{A,I-A\}}([\psi])_1$.
A deterministic one-versus-rest merge gives $P_E([\psi])_a=v_\psi(E_a)$, so the value assigned to an effect is independent of the POVM in which it occurs.
The unique distribution on the one-outcome POVM $\{I\}$, followed by the deterministic split to $\{I,0\}$, gives $v_\psi(I)=1$.
For $A+B\leq I$, merge the first two outcomes of $\{A,B,I-A-B\}$ and compare with the separate binary margins to obtain $v_\psi(A+B)=v_\psi(A)+v_\psi(B)$.

Conversely, suppose such a family $v_\psi$ exists.
Finite additivity gives monotonicity: if $0\leq A\leq B\leq I$, then $v_\psi(B)=v_\psi(A)+v_\psi(B-A)\geq v_\psi(A)$.
It also gives $v_\psi(rA)=r\,v_\psi(A)$ for rational $r\in[0,1]$ by splitting $A$ into equal positive parts.
For real $t\in[0,1]$, squeeze $t$ between rational sequences and use monotonicity to obtain $v_\psi(tA)=t\,v_\psi(A)$.
Therefore, for every stochastic $\Lambda$,
\[
\begin{aligned}
P_{\Lambda_*E}([\psi])_b
&=v_\psi\!\left(\sum_a\Lambda(b|a)E_a\right)\\
&=\sum_a\Lambda(b|a)v_\psi(E_a)
 =\bigl(\Lambda P_E([\psi])\bigr)_b,
\end{aligned}
\]
which is (N).
\end{proof}

Inspection of the forward implication shows that only deterministic outcome maps were used. Therefore, on the domain of all finite POVMs, naturality under deterministic coarse-grainings already yields the additive valuation in Proposition~\ref{prop:naturality-additivity}; the converse implication then reconstructs naturality for every stochastic kernel. Restricting (N) from stochastic post-processings to deterministic ones does not weaken the axiom on this domain.

Thus, on finite POVMs, full stochastic naturality is not merely suggestive of effect noncontextuality: it is exactly effect-tuple functionality together with a normalized finitely additive probability assignment on effects.
This is the finite-outcome hypothesis class of Busch's POVM extension of Gleason's theorem \cite{Busch2003} and of the POVM frame-function analysis of Caves, Fuchs, Manne, and Renes \cite{CFMR2004}.
Theorem~\ref{thm:povm-closure} could therefore be recovered from Busch's representation theorem together with Corollary~\ref{cor:all-contexts}; I include the direct binary-margin proof because it is elementary, dimension-uniform, and exposes the exact bridges used.
The point is not that additivity has been eliminated.
It has been repackaged as naturality, while the projective values entering the closure are fixed separately by metric rigidity rather than by the additive assignment itself.

Yang and Fullwood give the closest categorical comparison: their measurement and probability functors are natural under deterministic measurable coarse-grainings, and their converse from natural transformations to quantum states invokes the Busch--Gleason representation \cite{YangFullwood2026}.
For the assignment on all finite POVMs considered here, however, deterministic naturality already extends to stochastic kernels by Proposition~\ref{prop:naturality-additivity}; the deterministic-versus-stochastic distinction is therefore not a weaker-principle escape route.
The categorical formulations can still differ in their choices of objects, morphisms, and functionality assumptions, so I do not identify the axiom systems without a separate translation.
On the restricted domain of projective measurements and their classical mixtures, Wright and Weigert derive the Born rule and density-operator formalism for $d\geq2$ from consistent probability assignments to those outcomes \cite{WrightWeigert2019}; there the consistency condition constrains the projective fragment, whereas here that fragment is already fixed contextwise by Theorems~\ref{thm:pvm-readout} and~\ref{thm:rank-general}.
Accordingly, the statement that noncontextuality arrives as an output rather than an axiom (Remark~\ref{rem:gleason-contrast}) is scoped strictly to projective contexts.
The finite-POVM extension imports effect-noncontextual content deliberately and visibly through (N).

\subsection{Naimark dilation as an alternative closure}

\begin{remark}[Dilation route]\label{rem:naimark}
An alternative conditional closure runs through the arbitrary-rank theorem.
For a finite POVM $E$ define the isometry $V\psi=\sum_aE_a^{1/2}\psi\otimes\ket{a}$ into $\mathcal K=\C^d\otimes\C^m$ (the standard finite-dimensional Naimark dilation \cite{BuschLahtiMittelstaedt1996}) and the projective measurement $Q=\{Q_a\}_{a=1}^m$ with $Q_a=I\otimes\ket{a}\bra{a}$, so that $V^\dagger Q_aV=E_a$.
If the readout of $Q$ on the enlarged projective space satisfies (H1), (H2), and (H3$'$), Theorem~\ref{thm:rank-general} gives $P_Q([\Phi])_a=\braket{\Phi}{Q_a\Phi}$, and the dilation-consistency axiom
\[
P_E([\psi])_a=P_Q([V\psi])_a
\tag{D}
\]
yields $P_E([\psi])_a=\braket{V\psi}{Q_a\,V\psi}=\braket{\psi}{E_a\psi}$.
Naimark's theorem alone supplies the projective representation; it does not identify the probability assignment of the original apparatus with that of the dilation.
Axiom (D) is where the implementation-equivalence content enters.
Calibration on $\mathcal K$ needs no entangled preparations: every unit vector of $\operatorname{range}Q_a=\C^d\otimes\ket{a}$ is a product state, although the embedded state $V\psi$ is generally entangled.
The costs are an ancilla, the projective hypotheses in the enlarged dimension $dm$, and (D) itself, together with either the canonical dilation above or an explicit invariance across equivalent dilations.
The two closures spend their assumptions in different currencies: (N) spends effect-functionality inside $\C^d$, while (D) spends ancillary and cross-dimensional implementation consistency.
Their agreement is a coherence check, not a derivation of one assumption from the other.
\end{remark}

\begin{remark}[The open POVM problem]\label{rem:open-povm}
Section~\ref{sec:povm-closure} does not provide a fixed-POVM analogue of Theorems~\ref{thm:pvm-readout} and~\ref{thm:rank-general}.
It proves the obstruction and displays two ways to add back enough structure.
A genuinely stronger POVM theorem would have to derive, rather than assume, either the binary effect-functionality and spectral-smearing bridges used in Theorem~\ref{thm:povm-closure}, or the dilation consistency (D), from independent operational, compositional, or geometric principles weaker than a generalized probability measure on all effects.
Informational completeness, the norm-$1$ property, and post-processing cleanliness do not by themselves recreate the missing anchors.
No such derivation is claimed here.
\end{remark}

\section{Corollaries and Alternative Routes}\label{sec:corollaries}

\subsection{Raw simplex normalization rigidity}

Suppose $f:[0,1]\to[0,1]$ is continuous and satisfies, for every $d\geq 3$,
\[
\sum_{i=1}^d f(u_i)=1
\qquad\text{for all } u=(u_1,\ldots,u_d)\in\Delta^{d-1}.
\]
Then $f(t)=t$ on $[0,1]$. (The hypothesis for the two values $d=3$ and $d=4$ already suffices.)

Reason: comparing the $d=3$ and $d=4$ vertex equations, $f(1)+2f(0)=1$ and $f(1)+3f(0)=1$, gives $f(0)=0$ and then $f(1)=1$.
For $u,v\geq0$ with $u+v\leq1$, set $w=1-u-v$.
Comparing $(u,v,w,0,\ldots,0)$ with $(u+v,w,0,\ldots,0)$ at fixed $d$ then gives Cauchy additivity $f(u)+f(v)=f(u+v)$ on the unit triangle, and continuity forces linearity, so $f(t)=t$.
The multi-dimension hypothesis is necessary, not decorative: for a single fixed $d$, the affine family
\[
f(t)=(1-da)\,t+a,\qquad 0<a\leq 1/d,
\]
satisfies the normalization identity on $\Delta^{d-1}$ and is not the identity.
The raw componentwise class is therefore rigid across dimensions before any Fisher argument enters, and only across dimensions.

\subsection{Escort-class Born uniqueness}

Let
\[
T_f(u)_i=\frac{f(u_i)}{\sum_j f(u_j)},
\]
with $f$ continuous on $[0,1]$, $C^1$ on $(0,1)$, strictly increasing, and $f(0)=0$.
Then $T_f$ is continuous on the closed simplex, interior-preserving, $C^1$ on $\Delta^{\circ}_{d-1}$, and vertex-fixing; in particular any escort readout of this class satisfies (H3) automatically, since $f(0)=0$ and $f(1)>0$.
Suppose an escort readout for the fixed PVM factors as
\[
P_M^{(f)}([\psi])=T_f\bigl(|\braket{e_1}{\psi}|^2,\ldots,|\braket{e_d}{\psi}|^2\bigr)
\]
and satisfies (H2).
Then $T_f$ is Fisher non-expanding on $\Delta^{\circ}_{d-1}$.
Indeed, if the pointwise inequality failed at some interior tangent pair $(p,v)$, take $p(s)=p+sv$ for sufficiently small $|s|$ and the associated full-support constant-phase curve $\psi(s)=\sum_i\sqrt{p_i(s)}\,\ket{e_i}$.
Along this curve Born saturates $F_{\mathrm{cl}}=F_Q$ by Lemma~\ref{lem:phase-variance}.
Since $T_f$ is $C^1$ and the Fisher form is smooth on the open simplex, strict expansion at $s=0$ would persist on an open parameter interval, contradicting the almost-everywhere condition (H2).
Therefore, for every interior pair $(p,v)$,
\[
g^F_{T_f(p)}\bigl(\dd T_f|_p\cdot v,\dd T_f|_p\cdot v\bigr)
\leq g^F_p(v,v).
\]
Theorem~\ref{thm:simplex-rigidity} then forces $T_f=\id$ on the simplex, so the escort readout collapses to Born.

For $d\geq 3$ this further pins the generator: $T_f=\id$ means $f(u_i)=u_i\sum_j f(u_j)$ on the interior, so any $t,t'\in(0,1)$ with $t+t'<1$ occur as two coordinates of one interior point and give $f(t)/t=f(t')/t'$.
For arbitrary $t,t'\in(0,1)$, choose $0<r<\min\{1-t,1-t'\}$ and compare each of $t,t'$ with $r$; hence $f(t)/t$ is constant and continuity gives $f(t)=ct$ on $[0,1]$ for some $c>0$.
Positive rescaling leaves $T_f$ unchanged.
At $d=2$ the map-level conclusion $T_f=\id$ still holds (the readout is Born) but no longer pins $f$: the nonlinear $f(t)=t(1+t-t^2)$ belongs to the stated generator class because $f'(t)=(1-t)(3t+1)>0$ for $0<t<1$, and it satisfies $T_f=\id$ on $\Delta^{1}$ because $f(t)=t\,S(t)$ with $S(t)=1+t(1-t)$ symmetric under $t\mapsto 1-t$.
The older escort derivations are therefore corollaries of the simplex rigidity theorem rather than the main architecture.

\subsection{Markov-invariance route}

There is also an independent probabilistic route inside the escort class.
If one imposes split-merge invariance under the Campbell--Reginatto congruent-embedding pattern,
\[
f(ks)+f((1-k)s)=f(s),
\]
then $f$ is additive on the unit triangle, and continuity again gives $f(t)=ct$.
This route is logically independent of Theorem~\ref{thm:pvm-readout}.
It does not prove the fixed-PVM readout theorem, but it reaches the same escort linearity by a Markov/coarse-graining axiom rather than by Fisher non-expansion.

\section{Discussion}

Theorems~\ref{thm:pvm-readout} and~\ref{thm:rank-general} do not reconstruct the full measurement postulates of quantum mechanics.
They say that once one fixed projective measurement is operationally identified and calibrated, square-root regularity plus a single metric admissibility condition freeze the rest of its pure-state readout.

Cencov and Campbell characterize the Fisher metric by invariance under Markov morphisms on probability simplices \cite{Cencov1981,Campbell1986}.
The object of uniqueness here is different: not a metric, but the square-root-regular readout maps of a fixed PVM that are compatible with Fisher non-expansion and calibration.

Other routes to the Born rule use different inputs: Gleason's global measure theorem on projectors \cite{Gleason1957}, Zurek's envariance program \cite{Zurek2005}, and the decision-theoretic program developed by Wallace \cite{Wallace2009}.
This paper is narrower in scope: for one fixed calibrated measurement, the conclusion follows from a direct metric rigidity argument.
The structural contrast with Gleason's route is isolated in Remark~\ref{rem:gleason-contrast}: nothing is assumed across measurement contexts, and the conclusion includes $d=2$, where Gleason's theorem fails; the price is per-apparatus inputs of a logically incomparable kind.
Reginatto's program uses generalized Markov mappings on the K\"ahler-geometric side to constrain the quantum metric and Hamiltonian structure, and it sits adjacent to the Markov-style escort corollary above.
The Fisher-geometric tooling here, the square-root chart on the simplex and Fisher information on probability spaces, overlaps with Reginatto's reconstruction \cite{Reginatto2014} and Caticha's entropic dynamics \cite{Caticha2015}; those programs reconstruct quantum geometry or dynamics, while Theorem~\ref{thm:pvm-readout} fixes the readout of one calibrated apparatus.

The Galley--Masanes and Masanes--Galley--M\"uller programs classify or constrain modified measurement postulates under operational and compositional principles such as purification, local tomography, and no-restriction \cite{GalleyMasanes2018,MasanesGalleyMuller2019}.
Theorem~\ref{thm:rank-general} assumes structures those programs try to derive: a fixed Hilbert-space PVM, calibrated basis labels, pure states, square-root regularity (H1), and the admissibility condition (H2).
The two analyses are defined over different objects (pure-state readouts of one fixed measurement here; measurement postulates constrained across systems and composites there), so neither class contains the other.
The concrete overlap is that both exclude escort deformations: the quartic escort ($f(t)=t^2$), a smooth calibrated single-system modification of the Born weights, already fails (H2), quadrupling classical Fisher information at the barycenter of the simplex.

The dimension dependence also matters. Theorem~\ref{thm:pvm-readout} is stated separately for each $d$, and Theorem~\ref{thm:rank-general} separately for each measurement, because there is no scalar generator behind a general readout map.
By contrast, escort corollaries lift across dimensions only because they are controlled by a single scalar function $f:[0,1]\to\R_+$.
Even there, Section~\ref{sec:corollaries} records two distinct thresholds: escort-generator linearity holds within any single fixed dimension $d\geq 3$ and fails at $d=2$, while the raw normalization class is rigid only when the hypothesis spans at least two dimension values.

The metric proofs cover fixed finite dimension, pure states, and projective measurements: rank-1 in Theorem~\ref{thm:pvm-readout} and arbitrary rank in Theorem~\ref{thm:rank-general}.
Those same metric hypotheses do not determine a general POVM readout.
An effect with $\norm{E_i}<1$ admits no preparation that reports its outcome with certainty, and even when every effect has norm one, the certainty subspaces of a non-projective POVM cannot span $\C^d$ (Lemma~\ref{lem:certainty-blocks}); the anchor equalities behind Remarks~\ref{rem:minimal} and~\ref{rem:anchors-general} then acquire slack, and Proposition~\ref{prop:povm-nonrigidity} and Theorem~\ref{thm:povm-boundary} turn that slack into non-Born readouts that survive the metric hypotheses and every occurrence-certainty anchor required by (H3$''$).
Axiom (N) restores the assignment through effect-functionality and finite additivity, and axiom (D) restores it through dilation consistency, but each imports the identification of unsharp effects rather than deriving it.
What identifies an unsharp effect operationally before the Born rule is assumed?
The metric argument does not answer this; mixed states and weakened calibration semantics remain separate open directions.

\section*{Acknowledgments}

I thank Marcel Reginatto for pointing me to the Cencov/Campbell formalism for congruent embeddings by Markov mappings, and for sharing his 2014 derivation of the K\"ahler reconstruction; the framing of the Markov-invariance route in Section~\ref{sec:corollaries} is directly indebted to that program.

\end{document}